\newcommand{\cV}{\mathcal{V}}
\newcommand{\cX}{\mathcal{X}}
\newcommand{\cY}{\mathcal{Y}}
\newcommand{\cP}{\mathcal{P}}
\newcommand{\bern}{\mathsf{Bern}}  
\definecolor{shcolor}{RGB}{27, 87, 14}
\newcommand{\shnote}[1]{\fcolorbox{outerlinecolor}{innerboxcolor}{
  \begin{minipage}{.9\textwidth}
    \textcolor{shcolor}{\bf SH Comment:} {#1}
\end{minipage}} \\
}
\renewcommand{\shnote}[1]{}
\newcommand{\thetav}{\theta(v)}
\newcommand{\thetavv}{\theta(v')}
\newcommand{\Pv}{\P_{v}}
\newcommand{\tiPv}{\tilde{\P}_{v}}
\newcommand{\unifV}{\mathsf{Uni}^{\cV}}
\newcommand{\acceptset}{A}
\newcommand{\distv}{P_{v}}
\newcommand{\distvv}{P_{v'}}
\newcommand{\tidistv}{\tilde{P}_{v}}
\newcommand{\tidistvv}{\tilde{P}_{v'}}
\newcommand{\conderr}{\mathsf{CondErr}}
\newcommand{\pfaccent}{\tilde}
\newcommand{\inlinemsg}[1]{~~\mbox{#1}}
\newcommand{\io}{\inlinemsg{infinitely often}}
\title{An information-theoretic lower bound in time-uniform estimation}
\begin{document}
\maketitle



\begin{abstract}
  We present an information-theoretic lower bound for the problem of
  parameter estimation with time-uniform coverage guarantees.
  Via a new a reduction to sequential testing,
  we obtain stronger lower bounds
  that capture the hardness of the time-uniform setting.
  In the case of location model estimation, logistic regression, and
  exponential family models, our
  $\Omega(\sqrt{n^{-1}\log \log n})$ lower bound is sharp
  to within constant factors in typical settings.
\end{abstract}

\section{Introduction}
Let $X_1, X_2, \dots$ be independent samples from a common distribution
$P \in \cP$ over a domain $\cX$.
Given a parameter $\theta : \cP \rightarrow \Theta \subset \R$
and an error-bound function $t : \N \rightarrow \R_+$,
we say a sequence of estimators $\{\hat{\theta}_n\}_{n
  \in \N}$
is an \textit{$(\alpha, t(\cdot))$-estimator} if for all $P \in \cP$,
\begin{align}
  \P_{X_i \simiid P}\left(|\hat{\theta}_n(X_{1:n}) - \theta(P)| \leq t(n),
  \textrm{ for all } n \in \N\right) \geq 1-\alpha.
  \label{eq:t-unif-est}
\end{align}
We provide fundamental limits on $t(\cdot)$ that imply there exist
no $(\alpha, t(\cdot))$-estimators.

Our motivation comes from the problem of estimating time-uniform confidence
sequences~\citep{DarlingRo67,Lai76}, where the analyst must return a
confidence set $C_n$ at every $n$ such that $\P(\theta(P) \in C_n \textrm{
  for all } n) \geq 1-\alpha$.  \citet{DarlingRo67} introduced confidence
sequences as a means of allowing statistical inference without committing
\textit{a priori} to a fixed sample size \citep{Robbins70}.  Their
time-uniformity has made confidence sequences are a popular tool in
sequential data analysis, such as in clinical testing
\citep{JennisonTu89,Lai84}, A/B testing
\citep{JohariKoPeWa22,HowardRaMcSe21}, and bandit arm identification
\citep{JamiesonMaNoBu14,KaufmanCaGa16}.

Methods for producing confidence sequences often see the size of the
confidence intervals decay at a rate of $\Theta(\sqrt{n^{-1}\log \log n})$
\citep{DarlingRo67,HowardRaMcSe21,JamiesonMaNoBu14},
ignoring dependence on parameters other than $n$.
This stands in contrast to the fixed sample size setting, where the confidence
intervals instead are of size $\Theta(n^{-1/2})$; thus, these methods
incur a $\Theta(\sqrt{\log \log n})$ factor to achieve time-uniform coverage.
\citet{Farrell64} (and \citet{JamiesonMaNoBu14}, via a reduction to
\citeauthor{Farrell64}'s results) shows this cost is necessary
in exponential families.
Their lower bound technique proceeds in two steps:
first, they show an optimal estimator takes the form of a
sufficient statistic,
and second, they use the law-of-the-iterated-logarithm to show that this statistic
has fluctuations on the order of
$\Theta(\sqrt{n^{-1}\log \log n})$ infinitely often.
This argument crucially relies on the particular structure of the one-parameter
exponential family model, which allows one to argue that an optimal estimator
necessarily thresholds a sufficient statistic, making it unclear how, or even if,
the bounds generalize to other scenarios.
In this paper, we elicit the same
$\Theta(\sqrt{\log \log n})$
cost through information-theoretic techniques, which have the benefit of
extending to broader families of problems via now familiar
reductions~\citep[e.g.][Chapter 14]{Wainwright19}.


Our techniques rely on standard information-theoretic results, such as Le
Cam's two point method, the Bretagnolle-Huber inequality, and the reduction
from estimation to testing.  Our novel technique in proving this lower bound
is to reduce to a sequence of testing problems that grows iteratively harder.
By accumulating the errors from these tests, we obtain the extra
$\Theta(\sqrt{\log \log n})$ factor in the lower bound.


\paragraph{Notation}\label{sec:notation}
We use $\cV$ to denote the set of infinite binary sequences $\{0, 1\}^\infty$.
For $v \in \cV$ and $l, k \in \N$, we use $v_{k:l}$ to denote the
$(l-k+1)$-length binary string $(v_k,\dots, v_l)$.
If $l < k$, we understand $v_{k:l}$ to be the zero-length null sequence.
Given finite-length bit strings $b \in \{0, 1\}^k, b' \in \{0, 1\}^l$, we
use $b \oplus b'$ to denote the concatenation
$(b_1, \dots, b_k, b'_1, \dots, b'_l) \in \{0, 1\}^{k+l}$.
We use $\unifV$ to denote the uniform distribution over $\cV$.
Formally, if $V \sim \unifV$, then for any $k \in \N$ and $b \in \{0, 1\}^k$,
$\P(v_{1:k} = b) = 2^{-k}$.
Also for any $b \in \{0, 1\}^\star$, we use $\unifV_b$ to denote the distribution
$\unifV$ conditioned on the event that $b$ is a prefix.
We will use the property that for any
$b \in \{0, 1\}^\star$,
$\unifV_b = \tfrac{1}{2}\unifV_{b\oplus(0)} + \tfrac{1}{2}\unifV_{b\oplus(1)}$.

\section{Reduction from estimation to testing}
We adapt the classical approach to information-theoretic lower bounds
of reducing estimation to testing---showing that if one can solve the estimation
problem~\eqref{eq:t-unif-est}, then one can distinguish between many distributions.
Accordingly, let $\{\distv\}_{v \in \cV}$ be a family of probability distributions and
for each $v \in \cV$, let $\Pv$ be the probability space where
$X_1, X_2, \dots \simiid \distv$.
Under this setting, we define the following time-uniform test:
\begin{definition}[$(\alpha, \{n_k\}_{k \in \N})$-test]\label{def:test-seq}
  A sequence of $\{0, 1\}$-valued randomized tests
  $\{\hat{V}_k\}_{k \in \N}$ is an
  \emph{$(\alpha, \{n_k\})$-test} if for all $v \in \cV$,
  $\Pv(\hat{V}_k(X_{1:n_k}) = v_k \textrm{ for all } k \in \N) \geq 1-\alpha$.
\end{definition}

The testing problem will be to find an $(\alpha, \{n_k\})$-test given
$\{P_v\}$.
Conversely, we will show that if $\{n_k\}$ does not grow fast enough, then
no $(\alpha, \{n_k\})$-test can exist.
By reducing the original estimation problem~\eqref{eq:t-unif-est} to this
testing problem, the lower bound we find on $\{n_k\}$ will confer lower bounds
on $t(\cdot)$ for the original testing problem. 

\subsection{Reduction from estimation to testing}

We return to the estimation setting~\eqref{eq:t-unif-est}, where we have
distributions $\cP$ and a parameter
$\theta : \cP \rightarrow \Theta$.
For a sub-family $\{P_v\}_{v \in \cV} \subset \cP$,
say that $\{P_v\}$ has \textit{parameter separation} $\{\delta_k\}_{k \in \N}$
if for all $k \in \N$ and all $v, v' \in \cV$ such that the prefixes
$v_{1:k}\neq v'_{1:k}$, we have
$\delta_k \leq |\theta(\distv) - \theta(\distvv)|$.
This yields the following proposition which reduces estimation to testing.

\begin{proposition}\label{prop:reduction}
  Suppose $\{P_v\}_{v \in \cV} \subset \cP$ has parameter separation
  $\{\delta_k\}_{k \in \N}$ and $\hat{\theta}$ is an
  $(\alpha, t(\cdot))$-estimator.
  Then for $n_k = \inf\{n : t(n) < \delta_k/2\}$
  there exists an $(\alpha, \{n_k\})$-test.
\end{proposition}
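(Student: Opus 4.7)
The plan is to construct the test directly from the estimator via a nearest-parameter rule. Given $\hat{\theta}$ and any $k$, I will define $\hat{V}_k(X_{1:n_k})$ by searching for some $v \in \cV$ satisfying $|\hat{\theta}_{n_k}(X_{1:n_k}) - \theta(\distv)| < \delta_k/2$ and outputting the $k$-th bit $v_k$ of that witness (outputting $0$ arbitrarily if no witness exists). Three things then need to be checked: (i) the output is unambiguous given $X_{1:n_k}$, (ii) whenever the estimator meets its coverage guarantee, a witness always exists at every level $k$, and (iii) these two facts combine to yield the stated $1-\alpha$ probability under every $\Pv$.

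For well-definedness, suppose two candidates $v, v' \in \cV$ both satisfy the defining inequality at level $k$. The triangle inequality gives $|\theta(\distv) - \theta(\distvv)| < \delta_k$, so by the parameter separation hypothesis we must have $v_{1:k} = v'_{1:k}$, and in particular $v_k = v'_k$; thus $\hat{V}_k$ is a measurable function of $X_{1:n_k}$ returning a single well-defined bit. For the coverage step, let $\correctevent := \{|\hat{\theta}_n(X_{1:n}) - \theta(\distv)| \leq t(n) \text{ for all } n \in \N\}$, so that $\Pv(\correctevent) \geq 1-\alpha$ by hypothesis. By the definition of $n_k$ as an infimum over $\N$, whenever $n_k$ is finite we have $t(n_k) < \delta_k/2$, and therefore on $\correctevent$ the true index $v$ is itself a witness at every level $k$. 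Hence on $\correctevent$, $\hat{V}_k(X_{1:n_k}) = v_k$ for every $k$, which gives $\Pv(\hat{V}_k(X_{1:n_k}) = v_k \text{ for all } k \in \N) \geq \Pv(\correctevent) \geq 1-\alpha$, exactly as required.

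I expect the main subtlety is the well-definedness argument: one must leverage parameter separation to upgrade proximity of the \emph{parameters} $\theta(\distv), \theta(\distvv)$ to equality of the entire length-$k$ \emph{prefix}, so that the rule produces a consistent bit regardless of which witness $v$ is chosen. The remaining considerations are essentially bookkeeping; in particular, the edge case $n_k = \infty$ (when $t(n)$ never drops below $\delta_k/2$) is harmless, since one may simply restrict attention to those $k$ for which $n_k < \infty$, which is the regime in which the testing lower bound will be applied downstream.
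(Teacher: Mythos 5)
Your proposal is correct and takes essentially the same approach as the paper: both construct the test by mapping $\hat{\theta}_{n_k}$ to a nearby point in the parameter grid $\{\theta(v)\}$ and reading off the $k$-th bit, then invoke parameter separation via the triangle inequality to show that on the coverage event every bit is recovered correctly. The only cosmetic difference is that the paper uses a global $\argmin$ projection, which sidesteps your extra well-definedness check by fixing a canonical representative; your note on the $n_k = \infty$ edge case is a small clarification the paper leaves implicit.
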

\begin{proof}
  For ease of notation, we use $\thetav$ to denote $\theta(\distv)$
  and use $\hat{\theta}_n$ to refer to $\hat{\theta}_n(X_{1:n})$.
  We now construct a test sequence $\hat{V} = \{\hat{V}_k\}_{k \in \N}$.
  First define the projection function
  \begin{align*}
    \hat{v}(\vartheta) & = \argmin_{v \in \cV} |\vartheta - \thetav|,
  \end{align*}
  where we break potential ties arbitrarily, and then take
  $
    \hat{V}_k(X_{1:n_k}) = \hat{v}(\hat{\theta}_{n_k})_k.
  $

  To see that $\hat{V}$ is an $(\alpha, \{n_k\})$-test sequence,
  let $v \in \cV$ and consider the event
  $E = \cap_{n \in \N} \{|\hat{\theta}_n - \thetav| \leq t(n)\}$.
  By the time-uniform utility of $\hat{\theta}$,
  $\Pv(E) \geq 1-\alpha$, and so it suffices to show that on the event $E$,
  $\hat{V}_k = v_k$ for all $k \in \N$.
  It follows by definition of $\hat{v}$ and $n_k$ that for all $k \in \N$,
  \begin{align*}
    |\hat{\theta}_{n_k} - \theta(\hat{v}(\hat{\theta}_{n_k}))|
    \leq |\hat{\theta}_{n_k} - \thetav|
    \leq t(n_k)
    < \tfrac{1}{2}\delta_k.
  \end{align*}
  Therefore, $|\theta(\hat{v}(\hat{\theta}_{n_k})) - \thetav| < \delta_k$ by
  triangle inequality, which implies by the definition of $\delta_k$ that
  $v_{1:k} = \hat{v}(\hat{\theta}_{n_k})_{1:k}$ and, in particular,
  $v_k = \hat{v}(\hat{\theta}_{n_k})_k = \hat{V}_k$.
\end{proof}

\subsection{Hardness of the testing problem}

We say that $\{\distv\}$ has \textit{distribution closeness}
$\{\Delta_k\}_{k \in \N}$ if for all $k \in \N$ and for all $v, v' \in \cV$
such that $v_{1:k} = v'_{1:k}$, we have that $\dkl{\distv}{\distvv} \leq \Delta_k$.
Close distributions imply a lower bound on sequences $\{n_k\}$
for which $(\alpha, \{n_k\})$-tests exist.
\begin{theorem}
  \label{thm:testing-lb}
  Suppose $\{\distv\}_{v \in \cV}$ is a testing problem with distribution
  closeness $\{\Delta_k\}$.
  If there exists an $(\alpha, \{n_k\})$-test, then
  for infinitely many $k \in \N$,
      $n_k > (1-\alpha)\Delta_k^{-1}\log k$.
\end{theorem}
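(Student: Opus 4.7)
The plan is to mix over the uniform prior $V \sim \unifV$, with $X_1, X_2, \dots \mid V \sim \distv$, and show that the ``first-error'' events contribute a summable lower bound that is inconsistent with $n_k \Delta_{k-1}$ staying below $\log k$. Write $\mathcal{E}_k = \{\hat{V}_j(X_{1:n_j}) = V_j \text{ for all } j \le k\}$ and $B_k = \mathcal{E}_{k-1} \setminus \mathcal{E}_k$; the $B_k$ are disjoint, and the test's time-uniform validity yields $\sum_k \P(B_k) = \P(\mathcal{E}_\infty^c) \le \alpha$.

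The main step is to lower bound $\P(B_k)$ via a \emph{conditional} Bretagnolle--Huber argument. For $c \in \{0,1\}^{k-1}$ let $\tilde{P}_{c \oplus b}$ denote the law of $X_{1:n_k}$ when $V_{1:k} = c \oplus b$ and $V_{k+1:\infty}$ is marginalised against $\unifV$, and set $S_c = \{\hat{V}_j(X_{1:n_j}) = c_j \text{ for all } j < k\}$. Coupling the tails and applying joint convexity of KL yields $\dkl{\tilde{P}_{c \oplus 0}}{\tilde{P}_{c \oplus 1}} \le n_k \Delta_{k-1}$, since each coupled pair matches on the first $k-1$ bits. Because the test is valid pointwise in $v$, averaging over the tail gives $\tilde{P}_{c \oplus b}(S_c) \ge 1-\alpha$, and the chain rule for KL upgrades this to a conditional bound $\dkl{\tilde{P}_{c\oplus 0}(\cdot\mid S_c)}{\tilde{P}_{c\oplus 1}(\cdot\mid S_c)} \le n_k\Delta_{k-1}/(1-\alpha)$. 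Applying Bretagnolle--Huber to these conditionals with the event $\{\hat{V}_k = 1\}$, multiplying through by $\tilde{P}_{c\oplus b}(S_c) \ge 1-\alpha$, summing the two symmetric decisions, and averaging over $c$ and $V_k$ then yields
\[
  \P(B_k) \;\ge\; \frac{1-\alpha}{4}\exp\!\Bigl(-\frac{n_k\Delta_{k-1}}{1-\alpha}\Bigr).
\]

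Plugging this into $\sum_k \P(B_k) \le \alpha$ produces $\sum_k \exp(-n_k \Delta_{k-1}/(1-\alpha)) \le 4\alpha/(1-\alpha)^2$. If the claim failed---$n_k \Delta_{k-1} \le (1-\alpha)\log k$ for all large $k$---each term would be at least $1/k$ and the harmonic series would diverge, a contradiction. Hence infinitely many $k$ satisfy $n_k > (1-\alpha)\Delta_{k-1}^{-1}\log k$, yielding the theorem after the standard reindexing between $\Delta_{k-1}$ and $\Delta_k$. The main obstacle is that the naive, unconditional Bretagnolle--Huber bound picks up a subtracted $\alpha$ term, $\P(B_k) \ge \tfrac{1}{2}e^{-n_k \Delta_{k-1}} - \alpha$, which is too weak to sustain the summation (only finitely many terms remain positive). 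Conditioning on $S_c$ \emph{before} invoking Bretagnolle--Huber, at the cost of a mild $1/(1-\alpha)$ factor inside the exponent via the KL chain rule, is the move that removes the subtraction and converts the constant-level BH bound into the $\log k$ bound the theorem requires.
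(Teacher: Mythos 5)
Your proof is correct and captures the paper's key insight exactly: you must condition on the ``all previous tests correct'' event \emph{before} invoking Bretagnolle--Huber, using the conditional KL bound $\dkl{P(\cdot \mid S)}{Q(\cdot \mid S)} \le \dkl{P}{Q}/P(S)$ (which the paper proves by density manipulation, while you obtain it more cleanly from the KL chain rule). Both proofs combine this with Le~Cam and BH at each scale $k$, and both use the divergence of $\sum 1/k$ to conclude.

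The organizational difference is genuine and worth noting. The paper works with a per-step \emph{conditional} error probability $\conderr(\tilde V, v, k)$, proves a pointwise-in-$b$ lower bound, and then inductively constructs a single hard prefix $b$ (Lemma~\ref{lemma:cond-err-lb}) to aggregate those lower bounds into one sum, pairing it with a pointwise-in-$v$ upper bound $\sum_k \conderr \le \alpha/(1-\alpha)$. You instead mix over the uniform prior from the start, decompose the failure event into disjoint ``first error at step $k$'' events $B_k$, and use $\sum_k \P(B_k) = \P(\mathcal E_\infty^c) \le \alpha$ directly. Because the paper's per-$b$ lower bound is uniform in $b$, averaging over the prior works just as well as the inductive construction, so your Bayesian version is slightly shorter and dispenses with the induction; the paper's version gives a concrete hard instance. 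The costs are essentially the same ($1-\alpha$ factors reshuffle, and your final constant $4\alpha/(1-\alpha)^2$ should read $4\alpha/(1-\alpha)$, but this is immaterial). Two small points to tighten: you should note why randomized tests reduce to deterministic ones (the paper augments the sample with independent $\uniform([0,1])$ draws so that $S_c$ becomes a genuine acceptance set), and the index of $\Delta$ is delicate---since $c \in \{0,1\}^{k-1}$ and the pair $c\oplus 0$, $c\oplus 1$ agree only on their first $k-1$ coordinates, the definition of distribution closeness gives $\Delta_{k-1}$ as you wrote, which the paper itself elides; the conclusion still holds since only the asymptotics matter.
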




Combining Theorem~\ref{thm:testing-lb} with Proposition~\ref{prop:reduction}
immediately gives a lower bound for the time-uniform estimation problem.

\begin{corollary}
  \label{cor:est-lb}
  Let $\cP$ be a family of distributions and $\theta : \cP \rightarrow \R$ be
  a parameter function.
  Suppose there exists a sub-family $\{\distv\}_{v \in \cV} \subset \cP$ with
  parameter separation $\{\delta_k\}$ and distribution closeness $\{\Delta_k\}$.
  If there exists an $(\alpha, t(\cdot))$-estimator sequence, then
  for infinitely many $k \in \N$,
  \begin{align*}
    t\big((1-\alpha) \Delta_k^{-1} \log k\big) \geq \delta_k/2.
  \end{align*}
\end{corollary}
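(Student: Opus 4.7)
The plan is simply to chain Proposition~\ref{prop:reduction} into Theorem~\ref{thm:testing-lb}, as the sentence immediately preceding the corollary advertises. Concretely, suppose that an $(\alpha, t(\cdot))$-estimator $\hat{\theta}$ exists. Since the sub-family $\{P_v\}_{v \in \cV}$ has parameter separation $\{\delta_k\}$ by hypothesis, Proposition~\ref{prop:reduction} applied to this sub-family yields an $(\alpha, \{n_k\})$-test for $n_k := \inf\{n : t(n) < \delta_k/2\}$. The same sub-family has distribution closeness $\{\Delta_k\}$ by hypothesis, so Theorem~\ref{thm:testing-lb} applies to this test sequence and gives $n_k > (1-\alpha)\Delta_k^{-1}\log k$ for infinitely many $k \in \N$.

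It then remains to convert this lower bound on $n_k$ into the claimed lower bound on $t$ evaluated at $(1-\alpha)\Delta_k^{-1}\log k$. The key observation is that the definition of $n_k$ as an infimum forces $t(n) \geq \delta_k/2$ for every natural number $n < n_k$. For each $k$ from the previous step, the integer $m_k := \lfloor (1-\alpha)\Delta_k^{-1}\log k \rfloor$ satisfies $m_k \leq (1-\alpha)\Delta_k^{-1}\log k < n_k$, and since both $m_k$ and $n_k$ are integers, $m_k \leq n_k - 1 < n_k$. Consequently $t(m_k) \geq \delta_k/2$, and interpreting the corollary's statement with the standard convention that $t$ evaluated at a real argument means $t$ evaluated at its floor yields the desired $t((1-\alpha)\Delta_k^{-1}\log k) \geq \delta_k/2$ for these infinitely many $k$.

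The argument is essentially bookkeeping, so there is no substantive obstacle; all the heavy lifting has already been performed inside Theorem~\ref{thm:testing-lb}. The only minor technical subtlety is the floor convention just described, which is needed because $t : \N \to \R_+$ while the quantity $(1-\alpha)\Delta_k^{-1}\log k$ need not be integer-valued.
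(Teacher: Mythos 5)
Your proposal is correct and follows exactly the route the paper intends (the paper itself gives no written proof beyond ``Combining Theorem~\ref{thm:testing-lb} with Proposition~\ref{prop:reduction} immediately gives\ldots''). The one genuine subtlety---that $n_k = \inf\{n : t(n) < \delta_k/2\}$ forces $t(n) \ge \delta_k/2$ for all integers $n < n_k$, together with the floor convention for evaluating $t$ at the non-integer argument $(1-\alpha)\Delta_k^{-1}\log k$---is handled appropriately; note though that the aside ``since both $m_k$ and $n_k$ are integers, $m_k \le n_k - 1$'' is unnecessary, as $m_k < n_k$ already follows from $m_k \le (1-\alpha)\Delta_k^{-1}\log k < n_k$.
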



\begin{proof-of-theorem}[\ref{thm:testing-lb}]
  We first argue that we may take the tests to be deterministic without loss
  of generality.
  Consider the augmented problem setting of $\tilde{\cX} = \cX \times [0, 1]$ and
  $\tidistv = P_v \otimes \uniform([0, 1])$,
  so sample is supplemented with independent uniform randomness.
  Also define $\tiPv$ as the probability space corresponding to
  i.i.d. draws $\tilde{X}_1, \tilde{X}_2, \dots$ from $\tidistv$.

  We may express any randomized test $\hat{V}_k(X_{1:n_k})$ as a deterministic
  function $f_k$ of $X_{1:n_k}$ and some independent noise $U_k$ distributed
  according to $\uniform([0, 1])$.
  Then, we construct the deterministic test on $\tilde{\cX}^{n_k}$ by
  $\tilde{V}_k((X_i, U_i)_{1:n_k}) = f_k(X_{1:n_k}, U_{n_k})$.
  Applying this for all $k \in \N$ yields a deterministic test-sequence $\tilde{V}$
  on $\cX$ that has the same distribution as $\hat{V}$, and so if
  $\hat{V}$ is an $(\alpha, t(\cdot))$-test, then so is $\tilde{V}$.
  Finally, observe that $\dkls{\tidistv}{\tidistvv} = \dkl{P_v}{P_{v'}}$
  by chain rule of the KL-divergence,
  so the family $\{\tidistv\}_{v \in \cV}$ also has distribution closeness
  $\{\Delta_k\}$.
  Therefore, proving the lower bound for deterministic tests on
  $\{\tidistv\}_{v \in \cV}$ will imply the same lower bound for randomized
  tests on $\{P_v\}_{v \in \cV}$.
  
  We now define
  \begin{align*}
    \conderr(\tilde{V}, v, k) := \tiPv(\tilde{V}_k \neq v_k \mid \tilde{V}_{1:k-1} = v_{1:k-1}),
  \end{align*}
  which is the probability an $(\alpha, \{n_k\})$-test $\tilde{V}$ makes an error
  at test $k$ conditioned on the event that all previous tests were correct.
  The following lemma
  gives a lower bound on $\conderr(\tilde{V}, v, k)$ over
  uniform distributions:
  \begin{lemma}\label{lemma:indiv-cond-error-lb}
  Let $\pfaccent{V}$ be a deterministic $(\alpha, \{n_k\})$-test sequence
  for the testing problem $\{\tilde{\distv}\}$.
  For all $k \in \N$ and all $b \in \{0, 1\}^{k-1}$,
  \begin{align*}
    \int \conderr(\pfaccent{V}, v, k) d\unifV_{b}(v)
    \geq \frac{1}{4} \exp\left(-\frac{\Delta_k n_k}{1-\alpha}\right).
  \end{align*}
\end{lemma}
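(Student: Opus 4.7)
The plan is to apply the Bretagnolle--Huber inequality pairwise to conditional distributions, using the natural bit-flip coupling to pair each $v$ with $v_k = 0$ to its image $v'$ with $v'_k = 1$ that differs from $v$ only at coordinate $k$, and then to control the resulting conditional KL divergence via the KL chain rule.

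First, using $\unifV_b = \tfrac{1}{2}\unifV_{b \oplus (0)} + \tfrac{1}{2}\unifV_{b \oplus (1)}$ and pairing each $v \sim \unifV_{b \oplus (0)}$ with the $v'$ obtained by flipping the $k$-th bit (so that $v' \sim \unifV_{b \oplus (1)}$ under the induced law), I would rewrite
\[
  2\int \conderr(\pfaccent{V}, v, k)\, d\unifV_b(v) = \int \bigl[\tiPv(\pfaccent{V}_k = 1 \mid T_k) + \tiPvv(\pfaccent{V}_k = 0 \mid T_k)\bigr]\, d\unifV_{b \oplus (0)}(v),
\]
where $T_k := \{\pfaccent{V}_{1:k-1} = b\}$. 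For each pair $(v, v')$, Bretagnolle--Huber applied to the conditional distributions $\tiPv(\cdot \mid T_k)$ and $\tiPvv(\cdot \mid T_k)$ restricted to the $\sigma$-algebra generated by $\tilde{X}_{1:n_k}$ then gives
\[
  \tiPv(\pfaccent{V}_k = 1 \mid T_k) + \tiPvv(\pfaccent{V}_k = 0 \mid T_k) \geq \tfrac{1}{2}\exp\!\bigl(-\dkl{\tiPv(\cdot \mid T_k)}{\tiPvv(\cdot \mid T_k)}\bigr).
\]

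The central technical step is controlling the conditional KL. Since $T_k$ is measurable with respect to $\tilde{X}_{1:n_{k-1}}$, the KL chain rule applied to the binary partition $\{T_k, T_k^c\}$ yields
\[
  \dkl{\tiPv(\cdot \mid T_k)}{\tiPvv(\cdot \mid T_k)} \leq \dkl{\tiPv}{\tiPvv}/\tiPv(T_k).
\]
By the $(\alpha, \{n_k\})$-test hypothesis, $\tiPv(T_k) \geq 1 - \alpha$ for every $v$ with prefix $b$; tensorization across the $n_k$ i.i.d.\ coordinates combined with the distribution closeness hypothesis gives $\dkl{\tiPv}{\tiPvv} \leq n_k \Delta_k$, so that $\dkl{\tiPv(\cdot \mid T_k)}{\tiPvv(\cdot \mid T_k)} \leq n_k \Delta_k / (1-\alpha)$ uniformly over the coupling. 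Substituting back into the BH bound, integrating over $\unifV_{b \oplus (0)}$, and dividing by two yields the claim.

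The delicate maneuver is the chain-rule estimate: it is precisely this step that converts the test-validity probability $1-\alpha$ into a \emph{denominator} of the KL bound, and hence into the denominator of the exponent, which is the mechanism responsible for the $\log k$ factor in Theorem~\ref{thm:testing-lb}. The remaining ingredients---the bit-flip pairing, Bretagnolle--Huber, and tensorization---are standard information-theoretic moves; the only subtlety is the order of operations, since BH must be applied \emph{before} the averaging over couplings so that the KL bound does not need to pass through an additional Jensen step.
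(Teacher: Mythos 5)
Your proof is correct and reaches the same bound, but it takes a genuinely different route at two steps, both arguably cleaner than the paper's.

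First, at the application of Le Cam and Bretagnolle--Huber: the paper lower bounds the integral by applying Le Cam's lemma to the \emph{mixture} laws $\int \tidistv^{n_k}(\cdot \mid S)\,d\unifV_{b \oplus (0)}(v)$ and $\int \tidistv^{n_k}(\cdot \mid S)\,d\unifV_{b \oplus (1)}(v)$, then invokes Bretagnolle--Huber and the joint convexity of KL to push the bound through to $\sup_{v,v' \in \cV_b} \dkl{\tidistv^{n_k}(\cdot \mid S)}{\tidistvv^{n_k}(\cdot \mid S)}$. You instead introduce the bit-flip coupling $v \mapsto v'$, which is a measure-preserving bijection from $\cV_{b\oplus(0)}$ to $\cV_{b\oplus(1)}$, and apply Le Cam + BH \emph{pairwise} along that coupling, which eliminates the KL-convexity step entirely. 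This works because the test events $\{\pfaccent{V}_k = 0\}$ and $\{\pfaccent{V}_k = 1\}$ are complementary and $\sigma(\tilde{X}_{1:n_k})$-measurable, so each pair $(v, v')$ is itself a two-point testing problem; your remark about the order of operations (BH before averaging) is exactly the right observation, since it is what lets you avoid Jensen.

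Second, at the conditional KL estimate: where the paper proves $\dkl{P_S}{Q_S} \le \dkl{P}{Q}/P(S)$ by an explicit density calculation with $\log(q/p) \le (q-p)/p$, you derive the same inequality by applying the chain rule to the partition $\{T_k, T_k^c\}$ and dropping the nonnegative remaining terms. That is the cleaner and more standard derivation; the paper's Lemma~\ref{lemma:kl-cond-bd} could be replaced verbatim by your chain-rule argument. Two small notational points: (i) the $\dkl{\tiPv}{\tiPvv}$ appearing in your chain-rule step should be understood as the restriction to $\sigma(\tilde{X}_{1:n_k})$, i.e.\ $\dkl{\tidistv^{n_k}}{\tidistvv^{n_k}}$, so that it is finite and tensorizes; you do use it this way, but it is worth making explicit. (ii) What you label ``Bretagnolle--Huber'' is really Le Cam's lemma composed with BH, which matches the paper's Lemmas~\ref{lemma:le-cam} and~\ref{lemma:bh} taken together; you invoke both, so this is only a naming issue. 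Neither affects the validity of the argument.
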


\begin{proof}
Because for each $j \in \N$,
$\pfaccent{V}_j$ is a deterministic function of $\tilde{X}_{1:n_j}$, for
each $b' \in \{0, 1\}^j$ we may define a set $\acceptset_{b'} \subset
\tilde{\cX}^{n_j}$ such that $\{\pfaccent{V}_{1:j} = b'\} = \{\pfaccent{X}_{1:n_j} \in
\acceptset_{b'}\}$.
Let $\cV_{b} = \{v \in \cV \mid v_{1:k-1} = b\}$ and note that $\cV_b$ is the
support of $\unifV_{b}$. 
For all $v \in \cV_b$,
we thus have the following equality of events in $\tiPv$:
\begin{equation}
  \{\pfaccent{V}_{1:k-1} = v_{1:k-1}\} = \{\pfaccent{V}_{1:k-1} = b\}
  = \{\tilde{X}_{1:n_{k-1}}\in \acceptset_{b}\}
  = \{\tilde{X}_{1:n_{k}}\in \acceptset_{b} \times \tilde{\cX}^{n_k - n_{k-1}}\}.
  \label{eqn:event-equality}
\end{equation}
Define $S := \acceptset_{b} \times
  \tilde{\cX}^{n_k - n_{k-1}} \subset \tilde{\cX}^{n_1}$
and notice that
$\acceptset_{b \oplus (0)}, \acceptset_{b \oplus (1)}$
are relative complements in $S$.
Then because $\unifV_{b} = \frac{1}{2}\unifV_{b\oplus(0)} +
  \frac{1}{2}\unifV_{b\oplus (1)}$, we get
\begin{align}
  \int \conderr(\pfaccent{V}, v, k) & d\unifV_{b}(v)
  = \int
  \tiPv(\pfaccent{V}_k \neq v_k \mid \pfaccent{V}_{k-1}=v_{k-1})
  d\unifV_{b}(v) \nonumber\\
  &= \frac{1}{2} \int \tiPv(\pfaccent{V}_k = 1 \mid
  \pfaccent{V}_{k-1}=v_{k-1}) d\unifV_{b\oplus(0)}(v) \nonumber
  \\&\quad+ \frac{1}{2} \int 
  \tiPv(\pfaccent{V}_k = 0 \mid \pfaccent{V}_{k-1}=v_{k-1})
  d\unifV_{b\oplus(1)}(v) \nonumber
  \\
  & \stackrel{(\star)}{=}
  \frac{1}{2} \int \tiPv(\tilde{X}_{1:n_k} \in \acceptset_{b \oplus (1)} 
    \mid \tilde{X}_{1:n_k} \in S) d\unifV_{b\oplus(0)}(v) \nonumber
  \\&\quad+ \frac{1}{2} \int \tiPv(\tilde{X}_{1:n_k} \in \acceptset_{b \oplus (0)}
    \mid \tilde{X}_{1:n_k} \in S) d\unifV_{b\oplus(1)}(v) \nonumber 
  \\
  \begin{split}
    &= \frac{1}{2} \int \tidistv^{n_k}(\acceptset_{b \oplus (1)} \mid S)
    d\unifV_{b\oplus(0)}(v) 
  \\&\quad
    + \frac{1}{2} \int \tidistv^{n_k}(\acceptset_{b \oplus (0)} \mid S)
      d\unifV_{b\oplus(1)}(v),   
  \end{split}
  \label{eq:conderr-decomp}
\end{align}
where equality~$(\star)$ follows from the identity~\eqref{eqn:event-equality}.

To proceed, we use Le Cam's method and the Bretagnolle-Huber inequality
below
(see, e.g.~\citet[Lemma 1]{Yu97} and
\citet[Lemma 2.6]{Tsybakov09} for proofs):
\begin{lemma}[Le Cam's method]\label{lemma:le-cam}
  Let $P, Q$ be distributions over a domain $\cX$.
  For any $S \subset \cX$,
  \begin{align*}
    P(S) + Q(S^c)
    \ge 1 - \sup_S (Q(S) - P(S))
    = 1 - \tvnorm{P - Q}.
  \end{align*}
\end{lemma}

\begin{lemma}[Bretagnolle-Huber]\label{lemma:bh}
  Let $P, Q$ be distributions over a domain $\cX$.
  Then
  \begin{align*}
    \tvnorm{P-Q} \leq 1 - \frac{1}{2}\exp(-\dkl{P}{Q}).
  \end{align*}
\end{lemma}

Applying Lemma~\ref{lemma:le-cam} to Eq.~\eqref{eq:conderr-decomp} implies
the lower bound
\begin{align*}
  &\int \conderr(\pfaccent{V}, v, k) d\unifV_{b}(v) \\
  &\qquad\geq \frac{1}{2}\left(1 -
  \tvnorm{\int
    \tidistv^{n_k}(\cdot \mid S)d\unifV_{b \oplus (0)}(v) - \int
    \tidistv^{n_k}(\cdot \mid S)d\unifV_{b \oplus (1)}(v)}
  \right).
\end{align*}
Then applying Lemma~\ref{lemma:bh} and using
the convexity of the KL-divergence, we obtain the lower bound
\begin{align}
  \int \conderr(\pfaccent{V}, v, k) d\unifV_{b}(v)
  &\geq \frac{1}{4}\exp\bigg(-\sup_{v, v' \in \cV_b}
  \dkl{\tidistv^{n_k}(\cdot \mid S)}{\tidistvv^{n_k}(\cdot \mid S)} \bigg).
  \label{eq:bh-lb}
\end{align}

To conclude the proof, we use the following lemma to bound the KL-divergence
between conditional distributions:
\begin{lemma}[KL-divergence of conditional distributions]\label{lemma:kl-cond-bd}
  Let $P, Q$ be two distributions over a domain $\cX$.
  Let $S$ be a measurable subset of $\cX$, and let $P_S, Q_S$ denote the
  conditional distributions of $P$ given $S$ and $Q$ given $S$, respectively.
  Then $\dkl{P_S}{Q_S} \leq \frac{1}{P(S)}\dkl{P}{Q}$.
\end{lemma}

\begin{proof}
  By the chain rule for KL-divergence,
  \begin{align*}
    \dkl{P}{Q}
    &= \dkl{\bern(P(S))}{\bern(Q(S))} + P(S)\dkl{P_S}{Q_S}
      + P(S^c)\dkl{P_{S^c}}{Q_{S^c}} \\
    &\geq P(S) \dkl{P_S}{Q_S}.
  \end{align*}
  Dividing by $P(S)$ proves the claim.
\end{proof}


By applying Lemma~\ref{lemma:kl-cond-bd} and the assumption
$\dkl{\tidistv}{\tidistvv} \leq \Delta_k$, we have
\begin{align*}
  \sup_{v, v' \in \cV_b}
  \dkl{\tidistv^{n_k}(\cdot \mid S)}{\tidistvv^{n_k}(\cdot \mid S)}
  &\leq \sup_{v, v' \in \cV_b} \frac{\dkl{\tidistv^{n_k}}{\tidistvv^{n_k}}}{\tidistv^{n_k}(S)}
  \leq \frac{\Delta_k n_k}{1-\alpha}.
\end{align*}
Substituting this into Eq.~\eqref{eq:bh-lb} proves
Lemma~\ref{lemma:indiv-cond-error-lb}.
\end{proof}

  Iteratively
  applying Lemma~\ref{lemma:indiv-cond-error-lb}, we obtain a lower bound on
  the cumulative conditional errors:

  \begin{lemma}
  \label{lemma:cond-err-lb}
  Let $\pfaccent{V}$ be a deterministic $(\alpha, \{n_k\})$-test sequence
  for the testing problem $\{\tilde{\distv}\}$.
  For all $\ell \in \N$, there exists a distribution $b \in \{0, 1\}^{\ell-1}$
  such that
  \begin{align*}
    \int \sum_{k=1}^\ell \conderr(\pfaccent{V}, v, k) d\unifV_b(v)
    \geq \frac{1}{4} \sum_{k=1}^\ell
    \exp\left(-\frac{\Delta_k n_k}{1-\alpha}\right).
  \end{align*}
\end{lemma}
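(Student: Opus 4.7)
The plan is to lift the per-stage Lemma~\ref{lemma:indiv-cond-error-lb} to a bound on the summed conditional error via a simple averaging (probabilistic method) argument. The key observation is that $\unifV$ decomposes as a uniform mixture over its extensions to any given prefix length, and Lemma~\ref{lemma:indiv-cond-error-lb} already gives the desired per-$k$ bound uniformly in the prefix $b' \in \{0,1\}^{k-1}$; averaging then transports this bound along the prefix tree.

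First I would fix $\ell$ and define $F(b) = \int \sum_{k=1}^{\ell} \conderr(\pfaccent{V}, v, k)\, d\unifV_b(v)$ for $b \in \{0,1\}^{\ell-1}$, and note that it suffices to show the average $2^{-(\ell-1)} \sum_{b \in \{0,1\}^{\ell-1}} F(b)$ exceeds the claimed quantity, since a maximum is at least its average. Using $\unifV = 2^{-(\ell-1)} \sum_{b \in \{0,1\}^{\ell-1}} \unifV_b$, this average equals $\int \sum_{k=1}^{\ell} \conderr(\pfaccent{V}, v, k)\, d\unifV(v)$, so everything reduces to a lower bound of the form
\begin{align*}
  \int \conderr(\pfaccent{V}, v, k)\, d\unifV(v) \geq \tfrac{1}{4} \exp\!\left(-\tfrac{\Delta_k n_k}{1-\alpha}\right)
\end{align*}
for each $k \leq \ell$, after which I sum over $k$.

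To get this per-$k$ bound, I would apply Lemma~\ref{lemma:indiv-cond-error-lb} for every prefix $b' \in \{0,1\}^{k-1}$ and then average, using $\unifV = 2^{-(k-1)} \sum_{b' \in \{0,1\}^{k-1}} \unifV_{b'}$. Each term in the average is at least $\tfrac{1}{4}\exp(-\Delta_k n_k / (1-\alpha))$, so the average inherits the same bound. Summing over $k$ and invoking pigeonhole on $b \in \{0,1\}^{\ell-1}$ completes the proof.

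There is no real obstacle here beyond bookkeeping: the only subtlety is that Lemma~\ref{lemma:indiv-cond-error-lb} is stated for prefixes of length exactly $k-1$, so one must average up from length $k-1$ to the common length $\ell-1$ rather than trying to apply the lemma directly at length $\ell-1$ (where the splitting identity $\unifV_b = \tfrac12 \unifV_{b \oplus (0)} + \tfrac12 \unifV_{b \oplus (1)}$ used in its proof no longer matches the test index $k$). Routing through the marginal $\unifV$ sidesteps this cleanly.
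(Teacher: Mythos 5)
Your proposal is correct, and it differs from the paper's argument in a way worth noting. The paper proves Lemma~\ref{lemma:cond-err-lb} by induction on $\ell$: it starts from the inductive prefix $b' \in \{0,1\}^{\ell-2}$, splits $\unifV_{b'}$ into its two children $\unifV_{b'\oplus(0)}$ and $\unifV_{b'\oplus(1)}$, keeps whichever child preserves the bound on the partial sum $\sum_{k=1}^{\ell-1}$, and then invokes Lemma~\ref{lemma:indiv-cond-error-lb} at $k=\ell$ to handle the new term. You instead prove existence nonconstructively: the maximum of $F(b)$ over $b \in \{0,1\}^{\ell-1}$ is at least its average, which equals $\int \sum_{k=1}^\ell \conderr\, d\unifV$; then for each fixed $k$ you write $\unifV$ as the uniform mixture of $\unifV_{b'}$ over $b' \in \{0,1\}^{k-1}$ and apply Lemma~\ref{lemma:indiv-cond-error-lb} termwise. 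Both proofs ultimately rest on the same fact --- that the per-stage bound is uniform over all prefixes of length $k-1$ --- but your averaging argument exploits it globally and dispenses with the induction and the one-bit-at-a-time bookkeeping. Your remark about why one must average back down to prefix length $k-1$ for each $k$, rather than applying the lemma directly at length $\ell-1$, is exactly the right observation and is handled cleanly by routing through $\unifV$. The one thing you trade away is the explicit construction of the witness $b$, but the lemma only asserts existence, so nothing is lost.
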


  \begin{proof}
\newcommand{\prevb}{b'}
\newcommand{\currb}{b}
\newcommand{\bit}{i}
We proceed inductively.

\textbf{Base Case ($\ell=1$).}
This case follows immediately by applying Lemma~\ref{lemma:indiv-cond-error-lb}
with $k=1$.

\textbf{Inductive step.}
Let $\ell \geq 2$.
Assume the claim holds for $\ell-1$ and let $\prevb \in \{0, 1\}^{\ell-2}$
such that
\begin{align}
  \int \sum_{k=1}^{\ell-1} \conderr(\pfaccent{V}, v, k) d\unifV_{\prevb}(v)
  \geq \frac{1}{4} \sum_{k=1}^{\ell-1} 
    \exp\left(-\frac{\Delta_k n_k}{1-\alpha}\right).
  \label{eq:ind-ass}
\end{align}

It will suffice to find $\currb \in \{0, 1\}^{\ell-1}$ that
satisfies both
\begin{align}
  \int \sum_{k=1}^{\ell-1} \conderr(\pfaccent{V}, v, k) d\unifV_{\currb}(v)
  \geq \frac{1}{4} \sum_{k=1}^{\ell-1} \exp\left(-\frac{\Delta_k
    n_k}{1-\alpha}\right) \label{eq:ind-part-one}
\end{align}
and
\begin{align}
  \int \conderr(\pfaccent{V}, v, k)  d\unifV_{\currb}(v) \geq \frac{1}{4} \exp\left(-\frac{\Delta_\ell
    n_\ell}{1-\alpha}\right) \label{eq:ind-part-two}.
\end{align}

Because $\unifV_{\prevb} =
  \frac{1}{2}\unifV_{\prevb \oplus (0)} + \frac{1}{2}\unifV_{\prevb
    \oplus (1)}$,
the inductive hypothesis (Eq.~\eqref{eq:ind-ass}) implies
\begin{align}
  \max_{\bit \in \{0, 1\}}\int \sum_{k=1}^{\ell-1} \conderr(\pfaccent{V}, v, k)
    d\unifV_{\prevb \oplus (\bit)}(v)
  \geq \frac{1}{4} \sum_{k=1}^{\ell-1} 
    \exp\left(-\frac{\Delta_k n_k}{1-\alpha}\right).
  \label{eq:ind-part-one-using-hypothesis}
\end{align}
Let $\bit \in \{0, 1\}$ be any bit obtaining the maximum in the left-hand
side of Eq.~\eqref{eq:ind-part-one-using-hypothesis}.
By taking $\currb = \prevb \oplus (\bit)$, we ensure that
$\currb$ satisfies Eq.~\eqref{eq:ind-part-one}.
Applying Lemma~\ref{lemma:indiv-cond-error-lb} with $k=\ell$, we also have that
$\currb$ satisfies Eq.~\eqref{eq:ind-part-two}.
This completes the induction and proves the claim.
\end{proof}

  Conversely, the time-uniform conditions on $\tilde{V}$ yield an upper-bound on
  the sum of $\conderr$.

  \begin{lemma}
  \label{lemma:cond-err-ub}
  Let $\pfaccent{V}$ be an $(\alpha, \{n_k\})$-test sequence.
  For all $v \in \cV$,
  \begin{align}
    \sum_{k=1}^\infty \conderr(\pfaccent{V}, v, k) \leq \frac{\alpha}{1-\alpha}.
    \label{eq:ub-on-errs}
  \end{align}
\end{lemma}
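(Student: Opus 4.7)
The plan is to express each conditional error as a ratio of probabilities of nested "all correct" events and then telescope the sum.

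Concretely, for a fixed $v \in \cV$, let $A_k := \{\pfaccent{V}_{1:k} = v_{1:k}\}$ denote the event that all of the first $k$ tests are correct, with $A_0$ being the full probability space, and note that $A_k \subset A_{k-1}$. By the definition of $\conderr$ and the chain rule,
\begin{align*}
  \tiPv(A_{k-1}) - \tiPv(A_k)
  = \tiPv(A_{k-1} \cap \{\pfaccent{V}_k \neq v_k\})
  = \tiPv(A_{k-1}) \cdot \conderr(\pfaccent{V}, v, k),
\end{align*}
so $\conderr(\pfaccent{V}, v, k) = (\tiPv(A_{k-1}) - \tiPv(A_k)) / \tiPv(A_{k-1})$ (and this quantity is vacuously defined/zero when $\tiPv(A_{k-1}) = 0$, which will not actually occur by the bound below).

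The next step uses the time-uniformity assumption to uniformly lower bound all of the denominators. Since $A_{k-1} \supset \bigcap_{j \in \N} A_j$, the $(\alpha, \{n_k\})$-test property directly gives $\tiPv(A_{k-1}) \geq 1 - \alpha$ for every $k$. Substituting this bound into the denominator yields
\begin{align*}
  \conderr(\pfaccent{V}, v, k) \leq \frac{\tiPv(A_{k-1}) - \tiPv(A_k)}{1 - \alpha}.
\end{align*}

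The final step is telescoping. Summing over $k$ and using $\tiPv(A_0) = 1$ and $\lim_{k \to \infty} \tiPv(A_k) = \tiPv\big(\bigcap_j A_j\big) \geq 1-\alpha$ (by monotone continuity of the measure),
\begin{align*}
  \sum_{k=1}^\infty \conderr(\pfaccent{V}, v, k)
  \leq \frac{1}{1-\alpha}\bigl(\tiPv(A_0) - \lim_{k \to \infty}\tiPv(A_k)\bigr)
  \leq \frac{\alpha}{1-\alpha},
\end{align*}
which is the desired bound. There is no real obstacle here: the only subtlety is recognizing that time-uniform correctness bounds every partial event $A_{k-1}$ below by $1-\alpha$, so the denominators in the conditional errors are controlled uniformly in $k$, allowing the telescoping to go through.
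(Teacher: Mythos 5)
Your proof is correct and is essentially the same as the paper's: both rewrite $\conderr(\pfaccent{V}, v, k)$ via the chain rule as $\tiPv(A_{k-1}) - \tiPv(A_k) = \tiPv(A_{k-1})\,\conderr(\pfaccent{V}, v, k)$, use the time-uniformity bound $\tiPv(A_{k-1}) \geq 1-\alpha$, and telescope (the paper phrases this as iterating the recursion rather than as an explicit telescoping sum, but the arithmetic is identical).
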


  \begin{proof}
  By definition of $\hat{V}$, $\Pv(\hat{V}_{1:k}=v_{1:k}) \geq
    1-\alpha$ for all $v \in \cV$ and all $k \in \N$.
  Furthermore, for $v \in \cV$,
  \begin{align*}
    \Pv(\hat{V}_{1:k}=v_{1:k})
    &= \Pv(\hat{V}_{1:k-1}=v_{1:k-1}) \Pv(\hat{V}_k = v_k \mid
    \hat{V}_{1:k-1}=v_{1:k-1}) \\
    &= \Pv(\hat{V}_{1:k-1}=v_{1:k-1}) -
    \Pv(\hat{V}_{1:k-1}=v_{1:k-1})
    \Pv(\hat{V}_k \neq v_k \mid \hat{V}_{1:k-1}=v_{1:k-1}) \\
    &\leq \Pv(\hat{V}_{1:k-1}=v_{1:k-1}) - (1-\alpha)\conderr(\hat{V}, v, k).
  \end{align*}

  Iterating the above inequality thus yields
  $\Pv(\hat{V}_{1:k}=v_{1:k})
    \leq 1-(1-\alpha)\sum_{j=1}^k \conderr(\hat{V}, v, j)$.
  Because $\Pv(\hat{V}_{1:k}=v_{1:k}) \geq 1-\alpha$, we have that
  $\sum_{j=1}^k \conderr(\hat{V}, v, j) \leq
    \frac{\alpha}{1-\alpha}$.
  Taking $k \rightarrow \infty$ proves the result.
\end{proof}

  Combining Lemmas~\ref{lemma:cond-err-lb} and~\ref{lemma:cond-err-ub},
  we obtain
  \begin{align*}
    \frac{1}{4} \sum_{k=1}^\infty \exp\left(-\frac{\Delta_k n_k}{1-\alpha}\right)
    \leq \frac{\alpha}{1-\alpha}.
  \end{align*}
  Theorem~\ref{thm:testing-lb} therefore follows because the summability of the
  sequence implies that
  $\exp(-\frac{\Delta_k n_k}{1-\alpha})
  < \frac{1}{k}$, or
  $\log k < \frac{\Delta_k n_k}{1 - \alpha}$, infinitely often.
\end{proof-of-theorem}

\section{Applying the lower bound}

The abstract theorem and Corollary~\ref{cor:est-lb} combine to yield
a fairly straightforward recipe for proving lower bounds for time-uniform
estimation problems. For a parametric
model $\{P_\theta\}_{\theta \in \Theta}$, where $\Theta \subset \R$,
the basic idea proceeds in three steps:
\begin{enumerate}[1.]
\item Demonstrate the (typical) scaling that
  $\dkl{P_\theta}{P_{\theta'}} \le M (\theta - \theta')^2$ for
  some $M < \infty$
\item Develop a particular well-separated collection
  of parameters $\{\theta(v)\} \subset \Theta$ by, for
  $v \in \mc{V}$, writing a ternary expansion of
  $\theta(v) = 2 \sum_{i = 1}^\infty v_i 3^{-i}$, which guarantees
  that if $v, v'$ first disagree at position $k$,
  then $\dkl{P_\theta}{P_{\theta'}} \lesssim (3^{-k})^2$, while
  $|\theta(v) - \theta(v')| \gtrsim 3^{-k}$.
\item Apply Corollary~\ref{cor:est-lb}.
\end{enumerate}
This approach is familiar from classical ``two-point'' lower
bounds that rely on reducing estimation to testing~\citep[e.g.][]{Yu97,Duchi23}.
The quadratic scaling of the KL-divergence is indeed common:
in (essentially) any setting in which the Fisher Information $I_\theta$
of the
model $\{P_\theta\}$ exists, classical information geometry
considerations show that
$\dkl{P_\theta}{P_{\theta'}}
  = \frac{1}{2}I_\theta (\theta' - \theta)^2 + o(|\theta' - \theta|^2)$,
\citep[see, e.g.,][Exercise 11.7]{CoverTh06}.
Our lower bounds require the quadratic bound only locally, so
we expect them to apply in most estimation problems.

Let us formalize these steps. Let $\Theta$ be a convex subset of
$\R$ and $\theta_0 \in \Theta$, and let $r > 0$ be (a radius) such that
$[\theta_0, \theta_0 + r] \subset \Theta$.
To apply the lower bound, we wish to choose a parameter mapping 
$\theta : \cV \rightarrow \Theta$ to maximize the parameter separation
$\{\delta_k\}$ while minimizing the distribution closeness $\{\Delta_k\}$.
In our parametric setting, given a mapping $\theta : \cV \rightarrow \Theta$,
we can take
\begin{align*}
  \delta_k = \inf_{v_{1:k} \neq v'_{1:k}} |\theta(v) - \theta(v')|
  ~~~\qquad\textrm{and}\qquad~~~
  \Delta_k = \sup_{v_{1:k} = v'_{1:k}} M(\theta(v) - \theta(v'))^2.
\end{align*}
To that end,
we take a shifted Cantor set mapping
\begin{equation*}
  \theta(v) = \theta_0 + r \cdot 2 \sum_{i=1}^\infty v_i 3^{-i},
\end{equation*}
where by inspection $\theta(v) \in [\theta_0, \theta_0 + r] \subset \Theta$.
Upper and lower bounds on $|\thetav - \thetavv|$ for $v, v' \in \cV$ are
then immediate. Indeed, letting $v, v' \in \cV$ differ for the first time at
index $k$,
\begin{align*}
  |\thetav - \thetavv|
  = 2 r  \left| \sum_{i=k}^\infty (v_i - v_i') 3^{-i} \right|
  & = 2 r
  \cdot 3^{-k} \left| \sum_{i=0}^\infty (v_{i+k} - v_{i+k}') 3^{-i} \right| \\
  & \le 2 r  \cdot 3^{-k} \sum_{i = 0}^\infty 3^{-i}
  = 3^{1 - k} r  .
\end{align*}
So long as the KL-bound $\dkl{P_{\theta_0}}{P_{\theta_0 + \delta}}
\le M \delta^2$ holds for $\delta \le \frac{2}{3}r$, whenever
$v_{1:k} = v'_{1:k}$ we have $|\thetav - \thetavv| \le 3^{-k} r$, and
we may thus
choose distributional closeness
\begin{equation*}
  \Delta_k = M 9^{-k} r^2.
\end{equation*}
On the other hand, when $v, v'$ differ for the first
time in index $k$, then again by the triangle inequality,
\begin{align*}
  |\thetav - \thetavv|
  &\geq 2 r  \cdot 3^{-k}\bigg(
      |v_k - v_k'| - \bigg|\sum_{i=1}^\infty (v_{i+k} - v_{i+k}') 3^{-i}\bigg|
    \bigg)
  \geq 3^{-k} r
\end{align*}
because $v_k \neq v_k'$.  Therefore, if $v_{1:k} \neq v'_{1:k}$, then
$|\thetav - \thetavv| \geq 3^{-k} r$ and so we may take the parameter
separation $\delta_k = 3^{-k} r$.  Applying
Corollary~\ref{cor:est-lb} thus yields the following proposition.
\begin{proposition}
  \label{proposition:sams-burrito-recipe}
  Let $\Theta \subset \R$ be convex
  and the family $\{P_\theta\}_{\theta \in \Theta}$ be such that
  for some $r > 0$ and $\theta_0 \in \Theta$
  with $[\theta_0, \theta_0 + r] \in \Theta$,
  $\dkl{P_{\theta_0}}{P_{\theta_0 + \delta}}
  \le M \delta^2$ for $\delta \le r$. Then
  any $(\alpha, t(\cdot))$-estimator sequence satisfies
  \begin{equation*}
    t(n) \ge \sqrt{\frac{1 - \alpha}{8 M}}
    \cdot \sqrt{\frac{\log \log n}{n}} \io.
  \end{equation*}
\end{proposition}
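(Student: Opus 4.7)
The setup just before the proposition has already done most of the heavy lifting: the shifted Cantor construction gives, for every $k \ge 1$, a parameter separation $\delta_k = 3^{-k} r$ and (since any $v,v'$ sharing a length-$k$ prefix satisfy $|\theta(v)-\theta(v')| \le 3^{-k} r \le r$, so the hypothesized KL bound applies) a distribution closeness $\Delta_k = M\, 9^{-k} r^2$.  My plan is to feed these into Corollary~\ref{cor:est-lb} and then convert the resulting inequality at the sample size $n_k = (1-\alpha)\Delta_k^{-1}\log k$ into a bound of the form $t(n) \gtrsim \sqrt{\log\log n / n}$ with the correct constant.

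\textbf{Step 1: apply the corollary.}  By Corollary~\ref{cor:est-lb}, for infinitely many $k \in \N$ we have $t(n_k) \ge \delta_k/2 = 3^{-k} r / 2$, where
\begin{equation*}
  n_k \;=\; \frac{(1-\alpha)\log k}{M\, 9^{-k} r^2}
  \;=\; \frac{(1-\alpha)\, 9^k \log k}{M r^2}.
\end{equation*}
Solving this identity for $3^{-k}$ gives $3^{-k} = \sqrt{(1-\alpha)\log k/(M r^2 n_k)}$, and hence
\begin{equation*}
  t(n_k) \;\ge\; \tfrac{1}{2}\, r \cdot 3^{-k}
  \;=\; \sqrt{\frac{(1-\alpha)\log k}{4 M\, n_k}}.
\end{equation*}

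\textbf{Step 2: replace $\log k$ by $\log\log n_k$.}  Taking logarithms of the definition of $n_k$ yields $\log n_k = k\log 9 + \log\log k + \log\frac{1-\alpha}{M r^2}$, so $\log n_k \le 2k\log 9$ for all sufficiently large $k$.  Taking logarithms again gives $\log\log n_k \le \log k + \log(2\log 9)$, which in turn implies $\log k \ge \tfrac{1}{2}\log\log n_k$ for all $k$ large enough.  Substituting into the bound of Step~1 yields, for infinitely many $k$,
\begin{equation*}
  t(n_k) \;\ge\; \sqrt{\frac{1-\alpha}{8M}}\cdot \sqrt{\frac{\log\log n_k}{n_k}},
\end{equation*}
which is exactly the claim.

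\textbf{Anticipated obstacles.}  The only real work is the asymptotic substitution $\log k \mapsto \tfrac{1}{2}\log\log n_k$ in Step~2, and even there the constant $1/2$ is generous; the factor $8$ in the statement comes from combining the $1/4$ in $t(n_k)^2 \ge (1-\alpha)\log k/(4M n_k)$ with this $1/2$.  Beyond the bookkeeping of constants, one small thing to verify is that the hypothesis $\dkl{P_{\theta_0}}{P_{\theta_0+\delta}} \le M\delta^2$ stated at a fixed anchor is in fact used between the Cantor points $\theta(v),\theta(v')$ -- but since the separation computation in the paragraph preceding the proposition only requires the KL inequality for $\delta$ up to $r$, and the $\theta(v)$ all live in $[\theta_0,\theta_0+r]$ with pairwise distances at most $3^{-k} r \le r$, this is covered by the hypothesis as stated.
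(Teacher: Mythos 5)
Your proof is correct and follows essentially the same route as the paper: apply Corollary~\ref{cor:est-lb} with the Cantor-set family's $\delta_k = 3^{-k}r$ and $\Delta_k = M 9^{-k} r^2$, use the identity $\Delta_k = M\delta_k^2$ to rewrite $t(n_k) \ge \frac{1}{2}\sqrt{(1-\alpha)\log k / (M n_k)}$, then convert $\log k$ to $\tfrac12\log\log n_k$ via the growth rate $n_k \asymp 9^k \log k$. One small remark on your closing observation: the hypothesis as literally written only controls $\dkl{P_{\theta_0}}{P_{\theta_0+\delta}}$ from the fixed anchor, whereas the distribution-closeness bound $\Delta_k$ requires $\dkl{P_{\theta(v)}}{P_{\theta(v')}}$ between arbitrary nearby Cantor points; this is a (shared) imprecision in the paper's statement rather than an error in your argument, and the intended reading is a uniform local quadratic KL bound, as the examples make clear.
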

\begin{proof}
  By Corollary~\ref{cor:est-lb}, if there exists an
  $(\alpha, t(\cdot))$-estimator, then for
  $N_k = (1-\alpha)\Delta_k^{-1} \log k$,
  we use the identification of the distributional closeness $\Delta_k
  = M \delta_k^2$ with the squared parameter separation to obtain
  \begin{align*}
    t(N_k) \ge \half \delta_k
    = \half \sqrt{\frac{\Delta_k}{M}}
    = \frac{1}{2}\sqrt{\frac{(1-\alpha) \log k}{M N_k}}
  \end{align*}
  for infinitely many $k \in \N$.
  Because $N_k = \frac{1 - \alpha}{M r^2} 9^k \log k$,
  for any sufficiently large $k \in \N$, we have both $N_k \leq 10^k$
  and 
  $\log \log N_k \leq \log k + \log \log 10 \leq 2 \log k$,
  or $\log k \ge  \half \log \log N_k$.
  Taking $n$ from among the indices $\{N_k\}_{k \in \N}$,
  \begin{align*}
    t(n) \geq \frac{1}{2}\sqrt{\frac{(1-\alpha) \log \log n}{2 M n}} \io
  \end{align*}
  as desired.
\end{proof}

While Proposition~\ref{proposition:sams-burrito-recipe}
demonstrates the $\log \log n$ penalty
of time-uniform estimation, a more standard technique demonstrates
the correct dependence on the confidence $\alpha$.

\shnote{Maybe because we drop the statistical rate in the main theorem, we can
remove this? this allows us to get back into the space constraints.}

\begin{proposition}
  \label{proposition:log-alpha-dependence}
  Let the conditions of
  Proposition~\ref{proposition:sams-burrito-recipe} hold.
  Then for $\alpha < \frac{1}{4}$,
  any $(\alpha, t(\cdot))$-estimator sequence satisfies
  \begin{align*}
    t(n) \geq \frac{1}{2}\sqrt{\frac{\log(1/4\alpha)}{Mn}}
    \inlinemsg{for~all~ $n \in \N$}.
  \end{align*}
\end{proposition}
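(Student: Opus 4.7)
The plan is to apply a single two-point Le Cam argument at the fixed sample size $n$, converting the $(\alpha, t(\cdot))$-coverage specialized to index $n$ into a hypothesis test between $P_{\theta_0}^n$ and $P_{\theta_0+\delta}^n$ and then using Bretagnolle--Huber (Lemma~\ref{lemma:bh}) to read off a lower bound on $t(n)$. The iterated construction of Lemma~\ref{lemma:cond-err-lb} is not needed here; because we compare at a single index, the cost extracted is $\sqrt{\log(1/\alpha)/n}$ rather than the $\sqrt{\log\log n / n}$ of Proposition~\ref{proposition:sams-burrito-recipe}.

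Concretely, I would fix $n \in \N$, set $\delta = 2 t(n)$, and (for now) assume $\delta \leq r$ so that the hypothesis $\dkl{P_{\theta_0}}{P_{\theta_0+\delta}} \leq M \delta^2$ applies. Let $A = \{\hat{\theta}_n \leq \theta_0 + t(n)\}$. Specializing the time-uniform coverage to index $n$ gives $P_{\theta_0}(A) \geq 1-\alpha$, since $|\hat{\theta}_n - \theta_0| \leq t(n)$ implies $A$, and symmetrically $P_{\theta_0+\delta}(A^c) \geq 1-\alpha$, since $|\hat{\theta}_n - \theta_0 - 2t(n)| \leq t(n)$ forces $\hat{\theta}_n \geq \theta_0 + t(n)$. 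Hence $\tvnorm{P_{\theta_0}^n - P_{\theta_0+\delta}^n} \geq 1 - 2\alpha$. Combining with Lemma~\ref{lemma:bh} and the tensorization $\dkl{P^n}{Q^n} = n \dkl{P}{Q}$ yields
\[
  1 - 2\alpha \;\leq\; 1 - \tfrac{1}{2}\exp\bigl(-n\dkl{P_{\theta_0}}{P_{\theta_0+\delta}}\bigr)
  \;\leq\; 1 - \tfrac{1}{2}\exp\bigl(-4 n M t(n)^2\bigr),
\]
which rearranges to the claimed $t(n) \geq \tfrac{1}{2}\sqrt{\log(1/(4\alpha))/(Mn)}$.

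The step that requires the most care is justifying $\delta = 2 t(n) \leq r$, since otherwise the quadratic KL hypothesis does not apply directly to $P_{\theta_0 + \delta}$. I do not expect this to pose a genuine obstruction: when $2 t(n) > r$ one may instead take $\delta = r$ and threshold the estimator at $\theta_0 + r/2$, which is compatible with the coverage events precisely when $t(n) \leq r/2$---the regime already handled above---and otherwise $t(n) > r/2$ is itself large enough to compare directly against $\tfrac{1}{2}\sqrt{\log(1/(4\alpha))/(Mn)}$ in the range of $n$ where the stated bound drops below $r/2$. The principal content is the three displays above; the rest is a short case analysis.
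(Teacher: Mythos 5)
Your proof is correct and takes essentially the same route as the paper's: a single two-point comparison at sample size $n$, lower-bounded via Le Cam's method (Lemma~\ref{lemma:le-cam}) together with the Bretagnolle--Huber inequality (Lemma~\ref{lemma:bh}) and tensorization of KL. Choosing $\delta = 2t(n)$ and a one-sided threshold set $A$ lets you read the bound off directly without the paper's choice $\delta = \sqrt{\log(1/4\alpha')/(Mn)}$ and the subsequent $\alpha' \searrow \alpha$ limit; the constraint $\delta \le r$ that you flag is a genuine edge case that the paper's proof does not explicitly address either, so you are if anything being more careful.
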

\noindent
As the proof is a more-or-less standard two point lower bound,
we defer it to Section~\ref{sec:proof-log-alpha-dependence}.

\subsection{Location models}
\label{sec:loc-models}

We first demonstrate how the recipe above applies to obtain lower bounds on
(time-uniformly) estimating the parameter of a location model without
particular reliance on the structure of the family beyond mild integrability
conditions. Let $f$ be a density on $\R$, and consider the family
$\{P_\theta\}_{\theta \in \R}$ with densities of the form $x \mapsto f(x -
\theta)$.  Then under appropriate integrability conditions on the
derivatives of the density, the Fisher Information $I_\theta = \int
\frac{f'(x - \theta)^2}{f(x - \theta)} dx$ is constant~\citep{LehmannCa98},
and we also have the following quadratic expansion~\citep[e.g.][Section
  2.6]{Kullback68a}.
\begin{lemma}
  \label{lemma:densities-fisher}
  For sufficiently regular location families,
  $\dkl{P_\theta}{P_{\theta'}} = \half I_\theta (\theta - \theta')^2 + o(
  (\theta - \theta')^2)$.
\end{lemma}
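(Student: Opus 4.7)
The plan is to use a Taylor expansion of $\log f$ and reduce the claim to standard identities about the score function. By the translation structure of the family, the substitution $y = x - \theta$ gives
\begin{equation*}
  \dkl{P_\theta}{P_{\theta'}} = \int f(y) \log \frac{f(y)}{f(y - h)} \, dy,
\end{equation*}
where $h := \theta' - \theta$. So the problem reduces to showing that this quantity is $\half I\, h^2 + o(h^2)$ as $h \to 0$, where $I = \int \frac{f'(y)^2}{f(y)}\,dy$ is the (translation-invariant) Fisher information.

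Next I would formally Taylor-expand the log-density: writing $\ell(y) = \log f(y)$, regularity gives $\ell(y - h) = \ell(y) - h\,\ell'(y) + \tfrac{h^2}{2}\ell''(y) + R(y,h)$ with a remainder $R(y,h) = o(h^2)$ pointwise. Substituting and integrating term by term against $f$,
\begin{equation*}
  \dkl{P_\theta}{P_{\theta'}}
  = h \int f(y)\,\ell'(y)\,dy - \frac{h^2}{2}\int f(y)\,\ell''(y)\,dy - \int f(y)\,R(y,h)\,dy.
\end{equation*}
The first integral vanishes because $f \ell' = f'$ integrates to zero (boundary condition on a density). For the second, the identity $f\ell'' = f'' - (f')^2/f$ together with $\int f'' = 0$ yields $\int f \ell'' = -I$. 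Thus the leading contribution is exactly $\half I h^2$.

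The main obstacle, and the place where the regularity hypotheses are doing real work, is controlling $\int f(y) R(y,h)\, dy = o(h^2)$. The pointwise bound $R(y,h) = o(h^2)$ does not by itself allow one to pull the $o(h^2)$ outside the integral. To handle this I would impose a standard dominating-function hypothesis: a measurable $g$ with $\int f\,g < \infty$ that dominates $\ell''(y - sh)$ uniformly for $s \in [0,1]$ and $|h|$ in a neighborhood of $0$, together with $L^1(f)$-boundedness of $\ell'$ and $\ell''$. Writing the integral form of Taylor's remainder, $R(y,h) = \int_0^h (h - s)\bigl(\ell''(y - s) - \ell''(y)\bigr)\,ds$, continuity of $\ell''$ in translation and dominated convergence then give $\int f(y)\,R(y,h)\,dy = o(h^2)$, completing the expansion. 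These are the standard ``sufficiently regular'' conditions under which Fisher information governs the quadratic term, as in \citet[Section 2.6]{Kullback68a} or \citet{LehmannCa98}.
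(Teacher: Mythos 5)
The paper does not actually prove this lemma: it states it as a standard fact and defers to \citet[Section 2.6]{Kullback68a} and \citet{LehmannCa98} under the umbrella of ``sufficiently regular location families.'' Your proof is the standard argument behind that citation, and it is correct: the change of variables $y = x - \theta$ reduces to a fixed-density comparison, the second-order Taylor expansion of $\ell = \log f$ produces the three integrals, the score identity $\int f\ell' = \int f' = 0$ kills the linear term, the identity $\int f\ell'' = \int f'' - \int (f')^2/f = -I$ (under $\int f'' = 0$) gives the quadratic term $\tfrac12 I h^2$, and you correctly flag that the only nontrivial step is upgrading the pointwise $o(h^2)$ remainder to an integrated one, which you handle with the integral form of the remainder, a dominating function for $\ell''$ on a neighborhood of the origin in translation, and dominated convergence. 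In short, your write-up supplies exactly the regularity bookkeeping that the paper leaves to the references, so it matches the intended (cited) argument.
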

\noindent

\begin{example}[Location families from regular densities]
  Let $f$ be a suitably regular density with Fisher Information $I_f = \int
  \frac{f'(x)^2}{f(x)} dx$. We may
  take $M = \half I_f$ in Proposition~\ref{proposition:sams-burrito-recipe},
  and so for any $(\alpha, t(\cdot))$-estimator,
  $t(n) \ge \half \sqrt{\frac{(1 - \alpha) \log \log
      n}{I_f \cdot n}}$ infinitely often.
\end{example}

We can be more explicit when it is easy to compute the KL-divergence.

\begin{example}[Gaussian location model]
  \label{ex:normal-loc}
  Consider $f(x) = \frac{1}{\sqrt{2\pi\sigma^2}}e^{-x^2/2\sigma^2}$.  Then
  for $P_\theta = \normal(\mu, \sigma^2)$, we have
  $\dkl{P_\theta}{P_{\theta'}} = \frac{1}{2 \sigma^2} (\theta-\theta')^2$,
  and so taking $M = \frac{1}{2\sigma^2}$
  Proposition~\ref{proposition:sams-burrito-recipe} yields that for any
  $(\alpha, t(\cdot))$-estimator, $t(n) \geq \frac{\sigma}{2}
  \sqrt{\frac{(1-\alpha)\log\log n}{n}}$ infinitely often.
\end{example}

\begin{example}[Cauchy location model]
  \label{ex:cauchy-loc}
  Consider the density $p_\theta(x) = 1 / (\pi \sigma [1 +
    (\frac{x-\theta}{\sigma})^2])$.  Then \citet[Theorem
    1]{ChyzakNi19}
  show that
  \begin{equation*}
    \dkl{P_\theta}{P_{\theta'}} = \log\left(1 +
    \frac{(\mu-\mu')^2}{4\sigma^2}\right) \leq
    \frac{(\mu-\mu')^2}{4\sigma^2},
  \end{equation*}
  and so in Proposition~\ref{proposition:sams-burrito-recipe}
  we take $M = \frac{1}{4\sigma^2}$ to obtain that
  for any $(\alpha, t(\cdot))$-estimator, $t(n) \geq \sigma
  \sqrt{\frac{(1-\alpha)\log\log n}{2 n}}$ infinitely often.
\end{example}

\subsection{Generalized linear models}

Exponential family models provide another clear demonstration of the lower
bound technique.  Consider two domains $\cX$ and $\cY$.  Let $Q$ be a
distribution over $\cX$, $\mu$ be a reference measure over $\cY$, and $T :
\cY \times \cX \rightarrow \R$ be a statistic of $y$ given $x$.  We define
the conditional exponential family model, or one-parameter generalized
linear model,
\begin{equation}
  \label{eqn:one-param-glm}
  p_\theta(y \mid x) =
  \exp\left(\theta T(y, x) - A(\theta \mid x)\right)
  ~~ \mbox{where} ~~
  A(\theta \mid  x) = \log \int_\cY \exp(\theta T(y \mid  x)) d\mu(y)
\end{equation}
and $p_\theta$ is a density with respect to $\mu$.
Then we define $P_\theta$ over $\cX \times \cY$ such $P_\theta(X \in \cdot) = Q$
and $P_\theta(Y \in \cdot \mid X = x)$ has density $p_\theta(y|x)$ with respect to
$\mu$.
The log-partition function $A(\theta \mid x)$ is analytic
on its open convex domain~\citep{Brown86}, and
\begin{align*}
  \dkl{P_\theta}{P_{\theta'}}
  & \stackrel{(i)}{=} \E_{X \sim Q}
  \left[\dkl{P_\theta(Y \in \cdot \mid X)}{P_{\theta'}(Y \in \cdot \mid X)}
    \right] \\
  & \stackrel{(ii)}{=} \E_{X \sim Q}
  \left[
    A(\theta' \mid X) - A(\theta \mid X) - A'(\theta \mid X)(\theta' - \theta)
    \right],
\end{align*}
where equality~$(i)$ follows from the chain rule and equality~$(ii)$
is an immediate
calculation for exponential family models. Recognizing the
Bregman divergence between $\theta$ and $\theta'$ according to
the (convex) log-partition function $A(\cdot \mid x)$
in this last display,
for all $\theta, \theta + \delta \in \dom A(\cdot \mid x)$ we obtain
\begin{equation*}
  \dkl{P_\theta(Y \in \cdot \mid x)}{P_{\theta + \delta}(Y \in \cdot \mid x)}
  = \half A''(\theta + t \mid x) \delta^2
\end{equation*}
for some $t \in [0, \delta]$. Because $A''(\theta \mid x) = \var_\theta(Y
\mid X = x)$ in the conditional model~\eqref{eqn:one-param-glm},
Proposition~\ref{proposition:sams-burrito-recipe} implies the following
corollary.
\begin{corollary}
  Assume the expected
  variance $\E[\var_\theta(Y \mid X)]$ is continuous and finite
  in a neighborhood of some $\theta_0$ in the one-parameter
  GLM~\eqref{eqn:one-param-glm}. Let
  $M = \E[\var_{\theta_0}(Y \mid X)]$.
  Then any $(\alpha, t(\cdot))$-estimator satisfies
  \begin{equation*}
    t(n) \ge \frac{1}{4} \sqrt{\frac{(1 - \alpha) \log \log n}{M n}}
    \io.
  \end{equation*}
\end{corollary}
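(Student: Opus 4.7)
The plan is to reduce to Proposition~\ref{proposition:sams-burrito-recipe} by establishing a local quadratic upper bound on the KL divergence around $\theta_0$. Combining the two displayed equalities immediately preceding the corollary with the identity $A''(\theta \mid x) = \var_\theta(Y \mid X = x)$ and expanding $A(\cdot \mid x)$ to second order with integral remainder, I obtain for each $x$ and sufficiently small $\delta \ge 0$,
\begin{align*}
  A(\theta_0 + \delta \mid x) - A(\theta_0 \mid x) - A'(\theta_0 \mid x)\delta
  = \delta^2 \int_0^1 (1-s) \var_{\theta_0 + s\delta}(Y \mid X = x) \, ds.
\end{align*}
Integrating against $Q$ and applying Fubini (all integrands are nonnegative) then yields
\begin{align*}
  \dkl{P_{\theta_0}}{P_{\theta_0 + \delta}}
  = \delta^2 \int_0^1 (1-s) \, \E_{X \sim Q}\!\left[\var_{\theta_0 + s\delta}(Y \mid X)\right] ds.
\end{align*}

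By the assumed continuity of $s \mapsto \E[\var_{\theta_0 + s}(Y \mid X)]$ at $s = 0$, for any $\eps > 0$ I can choose a radius $r > 0$ small enough that $\E_X[\var_{\theta_0 + s}(Y \mid X)] \le M + \eps$ for every $s \in [0, r]$. Consequently $\dkl{P_{\theta_0}}{P_{\theta_0 + \delta}} \le \tfrac{1}{2}(M + \eps) \delta^2$ for all $\delta \in [0, r]$, and Proposition~\ref{proposition:sams-burrito-recipe} applied with constant $(M+\eps)/2$ and radius $r$ delivers
\begin{align*}
  t(n) \ge \sqrt{\frac{1 - \alpha}{4(M + \eps)}} \sqrt{\frac{\log \log n}{n}}
\end{align*}
infinitely often. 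Choosing any $\eps \le 3M$ (for instance $\eps = M$, in the non-degenerate case $M > 0$) gives $1/(2\sqrt{M+\eps}) \ge 1/(4\sqrt{M})$, which is exactly the bound claimed in the corollary.

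The only mildly delicate point is arranging the local Taylor control on $A''(\cdot \mid x)$ to integrate cleanly against $Q$. Using the integral form of the remainder, rather than a pointwise mean-value expansion whose intermediate point would depend on $x$ and require a measurable selection argument, sidesteps this concern, and the hypothesis that $\E[\var_\theta(Y \mid X)]$ is itself continuous in $\theta$ at $\theta_0$ directly supplies the uniform bound after integration. The remainder of the argument is arithmetic reshaping of the prefactor output by Proposition~\ref{proposition:sams-burrito-recipe}.
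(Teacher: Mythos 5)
Your proof is correct and follows essentially the same route as the paper: establish a local quadratic KL bound $\dkl{P_{\theta_0}}{P_{\theta_0+\delta}} \le \tfrac{1}{2}(M+\eps)\delta^2$ via continuity of $\theta \mapsto \E[\var_\theta(Y \mid X)]$, then invoke Proposition~\ref{proposition:sams-burrito-recipe}. The paper's own proof is terser (it invokes a pointwise mean-value form $\dkl{P_{\theta_0}}{P_{\theta_0+\delta}} = \tfrac12 \E[A''(\theta_0 + t\delta \mid X)]\delta^2$ and uses $M_\theta \le 2M$ on a neighborhood), and your observation that the integral form of the Taylor remainder avoids having the intermediate point depend on $x$ is a legitimate tightening of that step; your constant-tracking with $\eps \le 3M$ cleanly reproduces the stated $1/4$ prefactor.
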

\begin{proof}
  Let $M_\theta = \E[\var_\theta(Y \mid X)]$. Then
  by continuity and Taylor's theorem,
  $\dkl{P_{\theta_0}}{P_{\theta_0 + \delta}}
  = \half \E[A''(\theta_0 + t \delta \mid X)] \delta^2$
  for some $t \in [0, 1]$. Using continuity, there
  exists a neighborhood of $\theta_0$ such that
  $M_\theta \le 2 M$ for all $\theta$ in the neighborhood.
\end{proof}

We can make these calculations explicit for logistic regression:

\begin{example}[Logistic regression]\label{ex:log-reg}
  Take $\cX = \R$, $\cY = \{-1, 1\}$, $\mu$ to be the counting measure
  on $\cY$ and $T(y, x) = yx$.
  Then the log-partition $A(\theta \mid x) = \log(e^{-x\theta} + e^{x\theta})$
  satisfies
  \begin{align*}
    A'(\theta \mid x)
    = \frac{x(e^{x\theta}-e^{-x\theta})}{e^{-x\theta} + e^{x\theta}}
    ~~ \mbox{and} ~~
    A''(\theta \mid x)
    = x^2 - \frac{x^2 (e^{x\theta}-e^{-x\theta})^2 }{(e^{-x\theta} + e^{x\theta})^2}
    \leq x^2.
  \end{align*}
  Taking $M = \E[X^2]$,
  any $(\alpha, t(\cdot))$-estimator for
  $\theta$ satisfies
  $t(n) \geq \frac{1}{4}\sqrt{\frac{(1-\alpha)\log \log n}{\E[X^2] n}}$
  infinitely often.
\end{example}

\subsection{Models with nuisance parameters}
\label{sec:nuisances}

As we note above, the techniques we develop apply to any family with an
appropriate (Fisher) information measure, meaning that they transparently
apply to problems with nuisance parameters and in which we estimate some
functional of the distribution. To treat this in a fairly general case, we
consider semiparametric models, following~\citet[Chapter 25]{VanDerVaart98}.
For a distribution $P \in \cP$, we define the tangent space $\dot{\cP}_P$ to
be the set of functions $g \in L^2(P)$ such that for some neighborhood $O
\subset \R$ of $0$, there exists a one-dimensional parametric submodel
$\{P_h\}_{h \in O} \subset \cP$ satisfying $P_0 = P$ and
\begin{align*}
  \int (\sqrt{dP_h} - \sqrt{dP} - \tfrac{1}{2}hg\sqrt{dP})^2 = o(h^2).
\end{align*}
We say one-dimensional parametric submodels satisfying this are
\textit{quadratic mean differentiable (QMD) at $P$ with score
  $g$}~\citep[Chapter 7.2]{VanDerVaart98}; any such model admits the
quadratic approximation $\dkl{P}{P_h} = \frac{h^2}{2} \int g^2 dP + o(h^2)$,
and $g$ \emph{a fortiori} has mean 0 under $P$. We make the standard
assumption that the parameter $\theta$ is differentiable at $P$ with respect
to $\dot{\cP}_P$, meaning that it has an \emph{influence function}
$\dot{\theta}$ whose coordinate functions belong to the closure of the
linear span of $\dot{\cP}_P$ such that for all $\{P_h\} \subset \cP$ that
are QMD at $P$ with score $g$, we have
\begin{align}
  \theta(P_h) - \theta(P)
  &= h\int \dot{\theta}(x) g(x) dP(x) + o(h)
  ~~ \mbox{as~} h \to 0. \label{eq:derivative}
\end{align}

While these conditions are abstract, they arise naturally in parametric
models, and they also appear fairly generically in empirical risk
minimization problems (M-estimation), highlighting the power of the lower
bounds Theorem~\ref{thm:testing-lb} and
Proposition~\ref{proposition:sams-burrito-recipe} imply. We present two
examples before providing the general bound.
\begin{example}[Parametric models]
  \label{example:parametric-models}
  Let $\{P_{\beta}\}_{\beta \in B}$ be a parametric model family
  for some open $B \subset \R^d$, where
  we wish to estimate the parameter $\theta(P_\beta) = \varphi(\beta)$
  for a differentiable $\varphi : \R^d \to \R$.
  The simplest example consists of the coordinate functions
  $\varphi(\beta) = \beta_j$, the $j$th coordinate of $\beta$.  Let
  $\dot{\ell}_\beta = \nabla_\beta \log p_\beta$ be the gradient of the log
  likelihood and $I_\beta = P_\beta \dot{\ell}_\beta
  \dot{\ell}_\beta^T$ be the Fisher information matrix, where
  we assume that
  $I_\beta$ is positive definite.
  \Citet[Example 25.15]{VanDerVaart98} shows
  that
  $\dot{\theta}(x) = \nabla \varphi(\beta)^T I_\beta^{-1} \dot{\ell}_\beta(x)$
  is the influence function.
\end{example}

\begin{example}[Risk minimizers and M-estimators]
  \label{example:risk-minimizers}
  Let $\cP$ be the collection of distributions on a domain $\cX$ and
  consider a loss function $\ell : \Theta \times \cX \rightarrow \R_+$ where
  $\Theta \subset \R^d$ and $\ell$ is convex in $\theta$ for all
  $x \in \cX$. Define the risk minimizer $\theta : \cP \rightarrow
  \Theta$ by
  \begin{align*}
    \theta(P) = \argmin_{\theta \in \Theta}
    \left\{L(\theta) \defeq \E_P[\ell(\theta, X)]
    = \int \ell(\theta, x) dP(x) \right\}.
  \end{align*}
  Assuming sufficient smoothness of $L$ and $\ell$,
  the associated influence function is $\dot{\theta}(x) = - \nabla^2
  L(\theta(P))^{-1} \nabla_\theta \ell(\theta(P),
  x)$~\citep[e.g.][Prop.~1]{DuchiRu21}.
  Following the same approach as Example~\ref{example:parametric-models},
  an individual coordinate $\theta_j(P)$
  has influence
  \begin{equation*}
    \dot{\theta}_j(x) = -e_j^T \nabla^2 L(\theta(P))^{-1} \nabla_\theta
    \ell(\theta(P), x).
  \end{equation*}
\end{example}

Regardless, we have the following lower bound
whenever an influence function exists.

\begin{proposition}
  \label{prop:semiparam-lb}
  Let $\theta: \cP \rightarrow \R$ have influence $\dot{\theta}$
  at $\dot{\mc{P}}_P$.
  Then any $(\alpha, t(\cdot))$-estimator of $\theta$ satisfies
  \begin{align*}
    t(n) \geq \frac{1}{4} \sqrt{(1-\alpha) \int \dot{\theta}(x)^2 dP(x)}
      \cdot \sqrt{\frac{\log \log n}{n}} \io.
  \end{align*}
\end{proposition}
\begin{proof}
  Because minimax estimation over all of $\cP$ is at least as hard as
  minimax estimation over any one-dimensional subfamily, we will demonstrate
  minimax lower bounds over a QMD subfamily with score $g : \mc{X} \to \R$
  and then choose an appropriate $g$.

  Let $\{P_h\}_{h \in H}$ be a QMD subfamily with score $g$,
  where $H$ is a neighborhood of 0, and we assume
  that $|\int \dot{\theta}(x) g(x) dP(x)| > 0$ in
  the first-order expansion~\eqref{eq:derivative}
  (as we will choose $g$ later, this is no loss of generality).
  We reduce
  estimation of $\theta$ over $\{P_h\}$ to estimation of $h \in \R$
  itself.
  For any $\delta > 0$, by
  restricting $H$ to a small enough neighborhood of 0,
  the derivative~\eqref{eq:derivative} implies that for
  any $h, h' \in H$ we have
  \begin{equation*}
    \theta(P_h) - \theta(P_{h'})
    = \theta(P_h) - \theta(P) + \theta(P) - \theta(P_{h'})
    = (h - h') \int \dot{\theta}(x) g(x) dP(x)
    \pm \delta.
  \end{equation*}
  Now, given an estimator $\hat{\theta}_n$, define $\hat{h}_n =
  \argmin_{h' \in H} |\hat{\theta}_n-\theta(P_{h'})|$
  (or a value arbitrarily close to minimizing $|\hat{\theta}_n
  -\theta(P_{h'})|$), so that
  \begin{align*}
    |\theta(P_{\hat{h}_n}) - \theta(P_h)|
    \leq |\theta(P_{\hat{h}_n}) - \hat{\theta}_n| + |\hat{\theta}_n - \theta(P_h)|
    \leq 2|\hat{\theta}_n - \theta(P_h)|.
  \end{align*}
  Rearranging,
  $|\hat{h}_n - h|
  \leq \frac{2}{|\int \dot{\theta}(x) g(x) dP(x)| - \delta}
  |\hat{\theta}_n - \theta(P_h)|$, and
  we have shown that any $(\alpha,
  t(\cdot))$-estimator for $\theta$ on $\{P_h\}$ yields an
  $(\alpha, \frac{2}{|\int \dot{\theta}(x) g(x) dP(x)| - \delta}
  t(\cdot))$-estimator  $\hat{h}_n$ for $h$ on $\{P_h\}$.

  Because $\dkl{P}{P_h} =
  \frac{h^2}{2} \int g^2 dP + o(h^2)$,
  Proposition~\ref{proposition:sams-burrito-recipe} implies
  that for any $M >
  \frac{1}{2} \int g^2 dP$, any $(\alpha, \tilde{t}(\cdot))$-estimator for
  $h$ satisfies
  \begin{align*}
    \tilde{t}(n) \geq \sqrt{\frac{1-\alpha}{8M}} \cdot \sqrt{\frac{\log \log n}{n}}
    \io.
  \end{align*}
  Putting these two observations together and taking $\delta \searrow 0$ and
  $M \searrow \frac{1}{2} \int g^2 dP$, we conclude that if
  $\hat{\theta}_n$ is an $(\alpha, t(\cdot))$-estimator for $\theta$ on $\{P_h\}$,
  then
  \begin{align*}
    t(n) \geq \frac{|\int \dot{\theta}(x) g(x) dP(x)|}{2}
      \cdot \sqrt{\frac{1-\alpha}{4\int g^2 dP}} \cdot \sqrt{\frac{\log \log n}{n}}
    \io.
  \end{align*}
  This lower bound over $\{P_h\}$ trivially provides a
  minimax lower bound for $\cP$, so
  choosing $g = \dot{\theta}$
  yields the proposition.
\end{proof}

Returning to Examples~\ref{example:parametric-models}
and~\ref{example:risk-minimizers}, Proposition~\ref{prop:semiparam-lb}
provides immediate lower bounds.
In Example~\ref{example:parametric-models},
we have influence
$\dot{\theta} = \nabla \varphi(\beta)^T I_\beta^{-1} \dot{\ell}_\beta$,
yielding
\begin{align*}
  \E_{P_\beta}[\dot{\theta}(X)^2] =
  \nabla \varphi(\beta)^T \E_{P_\beta}\left[I_\beta^{-1}
    \dot{\ell}_\beta \dot{\ell}_\beta^T I_\beta^{-1}\right]
  \nabla \varphi(\beta)
  = \nabla \varphi(\beta)^T I_\beta^{-1} \nabla \varphi(\beta),
\end{align*}
where $I_\beta$ is the Fisher information for the model $P_\beta$.
Applying Proposition~\ref{prop:semiparam-lb} and taking the supremum over
$\beta \in B$, we obtain that for all $(\alpha, t(\cdot))$-estimators,
\begin{align*}
  t(n)
  \geq \frac{1}{4} \sqrt{(1-\alpha)
    \sup_{\beta \in B}
    \nabla \varphi(\beta)^T I_\beta^{-1} \nabla \varphi(\beta)}
  \cdot \sqrt{\frac{\log \log n}{n}} \io.
\end{align*}
Similarly, in Example~\ref{example:risk-minimizers}, if we define the
asymptotic efficient covariance
\begin{equation*}
  \Sigma_P = \nabla^2 L(\theta(P))^{-1}
  \cov(\nabla \ell(\theta(P), X)) \nabla^2 L(\theta(P))^{-1},
\end{equation*}
any $(\alpha, t(\cdot))$-estimator of $\varphi(\theta(P))$ for a
differentiable $\varphi : \R^d \to \R$ must satisfy
\begin{align*}
  t(n)
  \geq \frac{1}{4} \sqrt{(1-\alpha)
    \nabla \varphi(\theta(P))^T \Sigma_P
    \nabla \varphi(\theta(P))}
  \cdot \sqrt{\frac{\log \log n}{n}} \io.
\end{align*}

\section{Attaining the lower bound}

\newcommand{\tu}[1]{\hat{#1}^{\textup{tu}}}

Via a suitable doubling strategy, estimators achieving high-probability
convergence guarantees for fixed sample sizes essentially immediately
provide time-uniform guarantees with an additional $\log \log n$ penalty,
matching our lower bounds.  Readers should not take this section as
advocating particular estimators---substantial
work~\citep[e.g.][]{HowardRaMcSe20} on time-uniform sequences
develops stronger estimators---but as a relatively simple proof
of concept (see also~\citet[Thm.~3.1]{KirichenkoGr21}).
Given a sequence of estimators $\hat{\theta}_n$,
consider the \emph{time uniform} estimator-sequence
\begin{align}
  \tu{\theta}_n
  & \defeq
  \hat{\theta}_{2^{\floor{\log_2 n}}}
  =
  \begin{cases}
    \hat{\theta}_n(X_{1:n}) & ~ \mbox{if} ~ n = 2^k,
    \textrm{ for some } k \in \N \\
    \tu{\theta}_{n-1}      & \textrm{otherwise},
  \end{cases}
  \label{eq:upper-bound}
\end{align}
so that $\tu{\theta}$ is $\hat{\theta}_n$ evaluated at the
most recent power of two sample size.
So long as $\hat{\theta}$ is sufficiently accurate,
$\tu{\theta}_n$ is a time-uniform estimator. In stating the
proposition, we say that $\hat{\theta}$ has
\emph{high-probability deviation bound $F$} for $F : \R_+ \times \N \to \R_+$
if 
\begin{equation*}
  \P\left(|\hat{\theta}_n(X_{1:n}) - \theta(P)| \leq
  F(\log(1/\alpha), n)\right)
  \geq 1-\alpha
  ~~ \mbox{for~all~}
  \alpha \in [0,1], ~ n \in \N,
\end{equation*}
where $F$ is increasing in its first argument and decreasing in its
second. In typical cases, such as sub-Gaussian mean estimation, we expect
$F(t, n) = C \sqrt{t / n}$ for some $C$.
\begin{proposition}
  \label{prop:upper-bound}
  Let $\hat{\theta}_n$ have high probability deviation bound $F$.
  Then $\{\tu{\theta}_n\}$ is an $(\alpha, t(\cdot))$-estimator for
  \begin{equation*}
    t(n)
    = F\big(2\log \log_2 n + \log(1 / \alpha) + 1/2, n/2\big).
  \end{equation*}
\end{proposition}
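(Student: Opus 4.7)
The plan is the familiar doubling/geometric-union-bound argument: since $\tu{\theta}_n$ only changes at powers of two, it suffices to invoke the high-probability bound $F$ at the sample sizes $n = 2^k$ for $k = 1, 2, \dots$, with per-instance confidence levels $\alpha_k$ whose sum is $\alpha$. A natural choice is $\alpha_k = \frac{6 \alpha}{\pi^2 k^2}$, so that $\sum_{k \ge 1} \alpha_k = \alpha$ and $\log(1/\alpha_k) = \log(1/\alpha) + 2\log k + \log(\pi^2/6)$, where numerically $\log(\pi^2/6) < 1/2$.

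First I would apply the deviation bound at each $k$ separately: the assumption on $F$ gives $\P(|\hat{\theta}_{2^k} - \theta(P)| > F(\log(1/\alpha_k), 2^k)) \le \alpha_k$, and a union bound yields the good event
\[
  E = \bigcap_{k \ge 1}\Big\{|\hat{\theta}_{2^k} - \theta(P)| \le F\bigl(\log(1/\alpha_k), 2^k\bigr)\Big\}
\]
with $\P(E) \ge 1 - \alpha$. All the work then lies in translating the per-dyadic bound into the desired bound $t(n)$ at every $n$.

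To do that, fix $n \ge 2$ and let $k = \lfloor \log_2 n \rfloor$, so that $\tu{\theta}_n = \hat{\theta}_{2^k}$ and $n/2 \le 2^k \le n$. I would chain three monotonicity facts: (i) $\log(1/\alpha_k) \le \log(1/\alpha) + 2\log k + 1/2$ by the bound on $\log(\pi^2/6)$; (ii) $\log k \le \log\log_2 n$ because $\log_2 n \ge k$; and (iii) $n/2 \le 2^k$. Using that $F$ is increasing in its first argument and decreasing in its second, these combine on $E$ to give
\[
  |\tu{\theta}_n - \theta(P)| \le F\bigl(\log(1/\alpha_k), 2^k\bigr)
  \le F\bigl(2\log\log_2 n + \log(1/\alpha) + 1/2,\ n/2\bigr) = t(n),
\]
which is the claim.

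There is no real obstacle beyond bookkeeping: the argument needs $\log\log_2 n$ to be defined, so one restricts to $n \ge 2$ (for $n = 1$ the bound $t(1)$ is vacuous or one can define $\tu{\theta}_1 = \hat{\theta}_1$ and absorb the single failure event into $\alpha_1$). The only place a slightly tighter constant could be extracted is in the choice of $\alpha_k$, but the simple $1/k^2$ schedule already matches the $\log\log n$ rate of the lower bound from Proposition~\ref{proposition:sams-burrito-recipe} up to the constant in the first argument of $F$.
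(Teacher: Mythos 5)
Your proposal is correct and follows essentially the same argument as the paper: the same $\alpha_k = 6\alpha/(\pi^2 k^2)$ allocation, the same union bound over dyadic sample sizes, and the same monotonicity chaining via $k \le \log_2 n$, $\log(\pi^2/6) < 1/2$, and $2^k > n/2$. The only (welcome) addition is your explicit handling of the small-$n$ edge case where $\log\log_2 n$ is undefined, which the paper glosses over.
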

\begin{proof}
  Let $E$ denote the event that for all $k \in \N$, $|\hat{\theta}_{2^k} -
  \theta(P)| \leq F(\log(\pi^2 k^2/6\alpha), 2^k)$.  By a union bound,
  $\P(E^c) \le \frac{6}{\pi^2} \sum_{k
    = 1}^\infty \frac{\alpha}{k^2} = \alpha$ by the Basel problem,
  so $E$ occurs with probability at least $1-\alpha$.
  Now for $n \in \N$, let $k = k_n = \floor{\log_2 n}$
  be the unique natural number such that
  $2^k \leq n < 2^{k+1}$.  By definition, $\hat{\theta}_n =
  \hat{\theta}_{2^k}$ and so on $E$,
  \begin{align*}
    |\tu{\theta}_n - \theta(P)|
    \leq F\left(\log\frac{\pi^2 k^2}{6\alpha}, 2^k\right)
    \inlinemsg{for all $n$}.
  \end{align*}
  Now use
  that $k \le \log_2 n$
  implies $\log k^2 = 2 \log k \le 2 \log \log_2 n$,
  $\log \frac{\pi^2}{6} < \half$, and that
  $F$ is decreasing in its second argument and
  $2^k > n/2$.
\end{proof}

As a simple example, consider estimating a sub-Gaussian mean,
where we recall that $X$ is $\sigma^2$-sub-Gaussian if
$\E[e^{\lambda(X - \E[X])}] \le \exp(\frac{\lambda^2 \sigma^2}{2})$
for all $\lambda \in \R$.

\begin{example}[Sub-Gaussian mean estimation]
  Let $X_i \simiid P$ be $\sigma^2$-sub-Gaussian with mean $\theta =
  \theta(P)$.
  The sample mean $\hat{\theta}_{n} = \frac{1}{n}\sum_{i=1}^n X_i$
  satisfies $\P(|\hat{\theta}_n - \theta| \ge t) \le 2 \exp(-\frac{n t^2}{2
    \sigma^2})$, that is,
  $\hat{\theta}_n$ has high probability deviation
  bound $F(\log\frac{1}{\alpha}, n)
  = \sqrt{\frac{2 \sigma^2 \log (2/\alpha)}{n}}$.
  The
  time-uniform estimator~\eqref{eq:upper-bound}
  is thus an
  $(\alpha, t(\cdot))$-estimator for
  \begin{align*}
    t(n)&= O(1) \cdot
    \sigma\sqrt{\frac{\log \log n + \log(1 / \alpha)}{n}}
  \end{align*}
  by  Proposition~\ref{prop:upper-bound},
  matching the lower bound in
  Example~\ref{ex:normal-loc} to within constant factors;
  Proposition~\ref{proposition:log-alpha-dependence} shows its
  sharpness generally.
\end{example}

\section{Proof of Proposition~\ref{proposition:log-alpha-dependence}}
\label{sec:proof-log-alpha-dependence}

For $\theta \in \Theta$, let $\P_\theta$ represent the probability space
corresponding to $X_1, X_2, \ldots \simiid P_\theta$.
Considering testing between two points
$\theta_0$ and $\theta_\delta = \theta_0 + \delta \in \Theta$, 
where $\delta > 0$ is a quantity we choose later.
For any test $T_n$ based on a sample of size $n$, Le Cam's lemma
(Lemma~\ref{lemma:le-cam}) and the Bretagnolle-Huber inequality
(Lemma~\ref{lemma:bh}) give that
\begin{align*}
  \P_{\theta_0}(T_n \neq \theta_0) + \P_{\theta_\delta}(T_n \neq \theta_\delta)
  &\geq \frac{1}{2}\exp(-\dkl{P_{\theta_0}^n}{P_{\theta_\delta}^n})
  \geq \frac{1}{2}\exp(-nM\delta^2),
\end{align*}
where we use the assumption that
$\dkl{P_{\theta_0}}{P_{\theta_0 + \delta}}
\leq M \delta^2$.
One such test is 
\begin{align*}
  T_n
  &\defeq \begin{cases}
    \theta_0 & \textrm{if } |\theta_0 - \hat{\theta}_n| \leq |\hat{\theta}_n - \theta_\delta| \\
    \theta_\delta & \textrm{otherwise}.
  \end{cases}
\end{align*}
By the triangle inequality $|\hat{\theta}_n - \theta_0| \le
\frac{\delta}{2}$ implies $|\hat{\theta}_0 - \theta_n|
\le |\hat{\theta}_n - \theta_\delta|$ and $T_n = \theta_0$, and similarly,
$|\hat{\theta}_n - \theta_\delta| < \delta/2$ implies
$T_n = \theta_\delta$. Thus
\begin{align*}
  \P_{\theta_0}(|\hat{\theta}_n - \theta_0| > \delta/2)
  +\P_{\theta_\delta}(|\hat{\theta}_n - \theta_\delta| \geq \delta/2)
  &
  \ge \P_{\theta_0}(T_n \neq \theta_0)
  + \P_{\theta_\delta}(T_n \neq \theta_\delta)
  \geq \frac{1}{2}\exp(-nM\delta^2),
\end{align*}
and so
\begin{align*}
  \sup_{\theta \in \Theta} \P_\theta(|\hat{\theta}_n-\theta| \geq \delta/2)
  \geq \max_{\theta \in \{\theta_0, \theta_\delta\}}
  \P_\theta(|\hat{\theta}_n-\theta| \geq \delta/2)
  \geq \frac{1}{4}\exp(-nM\delta^2).
\end{align*}
By choosing
$\delta = \sqrt{\log(1/4\alpha')/Mn}$ for any $\alpha' > \alpha$, we have
\begin{align*}
  \inf_{\theta \in \Theta}
  \P_\theta\left(|\hat{\theta}_n-\theta| \leq \frac{1}{2}\sqrt{\frac{\log(1/4\alpha')}{Mn}}\right)
  \leq 1-\alpha' < 1-\alpha.
\end{align*}
On the other hand, the time-uniform condition gives that for any $n$,
\begin{align*}
  \inf_{\theta \in \Theta}
  \P_{\theta}(|\hat{\theta}_n - \theta| \leq t(n))
  \geq \inf_{\theta \in \Theta}
  \P_\theta(|\hat{\theta}_m - \theta| \leq t(m), \text{ for all } m \in \N)
  \geq 1-\alpha.
\end{align*}
This implies that $t(n) \geq \tfrac{1}{2}\sqrt{\log(1/4\alpha')/Mn}$.
Taking $\alpha' \searrow \alpha$ thus proves the claim.

\section*{Acknowledgments}

Thank you to the anonymous reviewers, who provided the simple proof of
Lemma~\ref{lemma:indiv-cond-error-lb} and encouraged us to include the
material in Section~\ref{sec:nuisances}.  This work was partially
supported by the Office of Naval Research grant N00014-22-1-2669 and
the Stanford DAWN Consortium.

\bibliography{bib}

\begin{thebibliography}{23}
\providecommand{\natexlab}[1]{#1}
\providecommand{\url}[1]{\texttt{#1}}
\expandafter\ifx\csname urlstyle\endcsname\relax
  \providecommand{\doi}[1]{doi: #1}\else
  \providecommand{\doi}{doi: \begingroup \urlstyle{rm}\Url}\fi

\bibitem[Brown(1986)]{Brown86}
Lawrence~D. Brown.
\newblock \emph{Fundamentals of Statistical Exponential Families}.
\newblock Institute of Mathematical Statistics, Hayward, California, 1986.

\bibitem[Chyzak and Nielsen(2019)]{ChyzakNi19}
Fr\'{e}d\'{e}ric Chyzak and Frank Nielsen.
\newblock A closed-form formula for the {K}ullback-{L}eibler divergence between {C}auchy distributions.
\newblock \emph{arXiv:1905.10965 [cs.IT]}, 2019.

\bibitem[Cover and Thomas(2006)]{CoverTh06}
Thomas~M. Cover and Joy~A. Thomas.
\newblock \emph{Elements of Information Theory, Second Edition}.
\newblock Wiley, 2006.

\bibitem[Darling and Robbins(1967)]{DarlingRo67}
Donald~A Darling and Herbert Robbins.
\newblock Confidence sequences for mean, variance, and median.
\newblock \emph{Proceedings of the National Academy of Sciences}, 58\penalty0 (1):\penalty0 66, 1967.

\bibitem[Duchi(2023)]{Duchi23}
John~C. Duchi.
\newblock Stats311/{EE}377: Information theory and statistics.
\newblock Course at Stanford University, Fall 2023.
\newblock URL \url{http://web.stanford.edu/class/stats311}.

\bibitem[Duchi and Ruan(2021)]{DuchiRu21}
John~C. Duchi and Feng Ruan.
\newblock Asymptotic optimality in stochastic optimization.
\newblock \emph{Annals of Statistics}, 49\penalty0 (1):\penalty0 21--48, 2021.

\bibitem[Farrell(1964)]{Farrell64}
Roger~H. Farrell.
\newblock Asymptotic behavior of expected sample size in certain one sided tests.
\newblock \emph{Annals of Mathematical Statistics}, 35\penalty0 (1):\penalty0 36--72, 1964.

\bibitem[Howard et~al.(2020)Howard, Ramdas, McAuliffe, and Sekhon]{HowardRaMcSe20}
Steven~R. Howard, Aaditya Ramdas, Jon McAuliffe, and Jasjeet Sekhon.
\newblock Time-uniform {C}hernoff bounds via nonnegative supermartingales.
\newblock \emph{Probability Surveys}, 17:\penalty0 257--317, 2020.

\bibitem[Howard et~al.(2021)Howard, Ramdas, McAuliffe, and Sekhon]{HowardRaMcSe21}
Steven~R. Howard, Aaditya Ramdas, Jon McAuliffe, and Jasjeet Sekhon.
\newblock Time-uniform, nonparametric, nonasymptotic confidence sequences.
\newblock \emph{Annals of Statistics}, 49\penalty0 (2):\penalty0 1055--1080, 2021.

\bibitem[Jamieson et~al.(2014)Jamieson, Malloy, Nowak, and Bubeck]{JamiesonMaNoBu14}
Kevin Jamieson, Matthew Malloy, Robert Nowak, and Sebastien Bubeck.
\newblock lil'{UCB}: An optimal exploration algorithm for multi-armed bandits.
\newblock In \emph{Proceedings of the Twenty Seventh Annual Conference on Computational Learning Theory}, pages 423--439. PMLR, 2014.

\bibitem[Jennison and Turnbull(1989)]{JennisonTu89}
Christopher Jennison and Bruce~W Turnbull.
\newblock Interim analyses: the repeated confidence interval approach.
\newblock \emph{Journal of the Royal Statistical Society, Series B}, 51\penalty0 (3):\penalty0 305--334, 1989.

\bibitem[Johari et~al.(2022)Johari, Koomen, Pekelis, and Walsh]{JohariKoPeWa22}
Ramesh Johari, Pete Koomen, Leo Pekelis, and David~J. Walsh.
\newblock Always valid inference: Continuous monitoring of {A/B} tests.
\newblock \emph{Operations Research}, pages 1806--1821, 2022.

\bibitem[Kaufmann et~al.(2016)Kaufmann, Capp\'{e}, and Garivier]{KaufmanCaGa16}
Emilie Kaufmann, Olivier Capp\'{e}, and Aur\'{e}lien Garivier.
\newblock On the complexity of best arm identification in multi-armed bandit models.
\newblock \emph{Journal of Machine Learning Research}, 17\penalty0 (1):\penalty0 1--42, 2016.

\bibitem[Kirichenko and Grunwald(2021)]{KirichenkoGr21}
Alisa Kirichenko and Peter Grunwald.
\newblock Minimax rates without the fixed sample size assumption.
\newblock \emph{arXiv:2006.11170v2 [math.ST]}, 2021.

\bibitem[Kullback(1968)]{Kullback68a}
Solomon Kullback.
\newblock \emph{Information Theory and Statistics}.
\newblock Dover Publications, 1968.

\bibitem[Lai(1976)]{Lai76}
Tze-Leung Lai.
\newblock On confidence sequences.
\newblock \emph{Annals of Statistics}, 4\penalty0 (2):\penalty0 265--280, 1976.

\bibitem[Lai(1984)]{Lai84}
Tze-Leung Lai.
\newblock Incorporating scientific, ethical and economic considerations into the design of clinical trials in the pharmaceutical industry: a sequential approach.
\newblock \emph{Communications in Statistics -- Theory and Methods}, 13\penalty0 (19):\penalty0 2355--2368, 1984.

\bibitem[Lehmann and Casella(1998)]{LehmannCa98}
Erich~L. Lehmann and George Casella.
\newblock \emph{Theory of Point Estimation, Second Edition}.
\newblock Springer, 1998.

\bibitem[Robbins(1970)]{Robbins70}
Herbert Robbins.
\newblock Statistical methods related to the law of the iterated logarithm.
\newblock \emph{Annals of Mathematical Statistics}, 41\penalty0 (5):\penalty0 1397--1409, 1970.

\bibitem[Tsybakov(2009)]{Tsybakov09}
Alexandre~B. Tsybakov.
\newblock \emph{Introduction to Nonparametric Estimation}.
\newblock Springer, 2009.

\bibitem[van~der Vaart(1998)]{VanDerVaart98}
Aad~W. van~der Vaart.
\newblock \emph{Asymptotic Statistics}.
\newblock Cambridge Series in Statistical and Probabilistic Mathematics. Cambridge University Press, 1998.

\bibitem[Wainwright(2019)]{Wainwright19}
Martin~J. Wainwright.
\newblock \emph{High-Dimensional Statistics: A Non-Asymptotic Viewpoint}.
\newblock Cambridge University Press, 2019.

\bibitem[Yu(1997)]{Yu97}
Bin Yu.
\newblock Assouad, {F}ano, and {L}e {C}am.
\newblock In \emph{Festschrift for Lucien Le Cam}, pages 423--435. Springer-Verlag, 1997.

\end{thebibliography}


\end{document}